\newtheorem{theorem}{Theorem}[section]
\newtheorem{definition}[theorem]{Definition}
\newtheorem{lemma}[theorem]{Lemma}
\newtheorem{example}[theorem]{Example}
\newtheorem{remark}[theorem]{Remark}
\def\*#1{\mathbf{#1}}
\def\~#1{\boldsymbol{#1}}
\def\BibTeX{{\rm B\kern-.05em{\sc i\kern-.025em b}\kern-.08em
    T\kern-.1667em\lower.7ex\hbox{E}\kern-.125emX}}
\begin{document}

\title{Attribution Methods in Asset Pricing: Do They Account for Risk?
}

\author{\IEEEauthorblockN{Dangxing Chen
\IEEEauthorrefmark{1} 
\thanks{$^*$ Corresponding author.}
}
\IEEEauthorblockA{\textit{Zu Chongzhi Center for Mathematics and Computational Sciences} \\
\textit{Duke Kunshan University}\\
Kunshan, China \\
dangxing.chen@dukekunshan.edu.cn}
\and
\IEEEauthorblockN{Yuan Gao}
Charlotte, U.S. \\
gaoyuanmath@gmail.com
}

\maketitle

\begin{abstract}
Over the past few decades, machine learning models have been extremely successful. As a result of axiomatic attribution methods, feature contributions have been explained more clearly and rigorously. 
  There are, however, few studies that have examined domain knowledge in conjunction with the axioms. 
  In this study, we examine asset pricing in finance, a field closely related to risk management. Consequently, when applying machine learning models, we must ensure that the attribution methods reflect the underlying risks accurately. In this work, we present and study several axioms derived from asset pricing domain knowledge. It is shown that while Shapley value and Integrated Gradients preserve most axioms, neither can satisfy all axioms. Using extensive analytical and empirical examples, we demonstrate how attribution methods can reflect risks and when they should not be used. 
\end{abstract}

\begin{IEEEkeywords}
Attribution Methods, Risk Management, Explainability
\end{IEEEkeywords}

\section{Introduction}
Recent decades have seen a tremendous amount of success with machine learning (ML). The use of ML increases the accuracy of traditional statistical models, but often at the expense of transparency and explainability. In industries with high stakes, such as the financial industry, explainability is essential. In the model risk management handbook\footnote{https://www.occ.treas.gov/publications-and-resources/publications/comptrollers-handbook/files/model-risk-management/index-model-risk-management.html} recently released by the Office of the Comptroller of the Currency, it is stressed that ML models must be explained in a clear and concise manner when applied. This has led to extensive research on explainable machine learning \cite{lundberg2017unified,ribeiro2016should,horel2020significance,sundararajan2017axiomatic}. 

Our work is focused on the attribution problem, i.e., how the ability to ascribe a value to a base feature can be interpreted as its role in predicting. A leading approach to attribution is based on the Shapley value by \cite{shapley1953value}, other popular methods include Integrated Gradients (IG) by \cite{sundararajan2017axiomatic}. An elegant aspect of these approaches is the use of axioms as a guide. By utilizing a few axioms, attribution methods can be uniquely determined. 

In spite of the success of the axiomatic approach, previous studies focused solely on the axioms of general models with no prior domain expertise. However, domain knowledge has been extensively studied in science throughout history. In recent studies, it has been demonstrated that when knowledge is incorporated, ML models become more reasonable and could potentially achieve higher accuracy \cite{gupta2020multidimensional,chen2023address}. In this regard, we should not neglect the domain knowledge implied in ML models when explaining them. 

The knowledge required for different domains varies. This study focuses on finance, a high-risk industry. Specifically, we focus on the issue of asset pricing. In finance, risk management is crucial to making appropriate decisions and avoiding significant financial losses. Inadequate risk management could lead to catastrophic consequences. For example, we just witnessed the collapse of Silicon Valley Bank (SVB), which is considered to be the second-largest bank failure in the United States. Upon the failure of SVB, a ripple effect was felt throughout the financial system, causing the stock market to go into a panic. The improper management of risk is one of the major causes of its failure, as outlined in \cite{barr2023review}. In particular, the interest rate risk has been neglected by SVB. In the short version, SVB's portfolio was extremely sensitive to interest rates, and the interest rate continued to increase, resulting in tremendous losses. A careful risk management strategy could have prevented the tragedy of SVB. In the wake of the failure of SVB, we have learned a painful lesson that risk management is an extremely important component of finance, and we should be aware of various risks on a regular basis. As a result, we ask the following question: \textbf{When applying attribution methods to financial models, can attribution methods accurately capture the risks implicit in the model?}

We explore attribution problems for asset pricing in order to answer the above question. We propose a number of axioms to reflect risks. These axioms are carefully analyzed in relation to Shapley value and IG. Fortunately, both attribution methods are capable of preserving most of the axioms. IG, however, cannot maintain demand monotonicity, as if the model is monotonic in a certain feature, then the attributions for that feature increase as its value increases; Shapley value may involve calculations outside of the training domain, which could present difficulties such as times of financial crisis. Using extensive analytical and empirical examples, we demonstrate how attribution methods can reflect risks and when their application might be detrimental. 

Our work has made the following contributions:
\begin{itemize}
    \item Several axioms are proposed based on different aspects for reflecting risk in asset pricing. 
    \item Our examples demonstrate the motivations and significance of these axioms.
    \item We provide a thorough analysis of Shapley value and IG, outlining their advantages and disadvantages. 
\end{itemize}



\section{Preliminaries}
\label{prerequisites}
For problem setup, assume we have $n$ features. We denote a class of functions $f: \mathbb{R}^n \rightarrow \mathbb{R}$ by $\mathcal{F}$.  We assume $\mathcal{F}$ to be the set of real analytic functions. For simplicity, we omit the discussion of nonanalytic functions that can be approximated to arbitrary precision by analytic functions.

\subsection{Attribution Methods}
Following \cite{lundstrom2022rigorous}, we call the point of interest $\overline{\*x}$ to explain as an explicand and $\*x'$ a baseline. Without loss of generality (WLOG), we assume $\overline{\*x} \geq \*x'$. The Baseline Attribution Method that interprets features' importance is defined below.

\begin{definition}[Baseline Attribution Method (BAM)]
    Given $\overline{\*x}, \*x' \in \mathbb{R}^n$, $f \in \mathcal{F}$, a baseline attribution method is any function of the form $\*A: \mathbb{R}^n \times \mathbb{R}^n \times \mathcal{F} \rightarrow \mathbb{R}^n$. 
\end{definition}

The Shapley value \cite{shapley1951notes} takes as input a set function $v:2^N \rightarrow \mathbb{R}$, which produces attributions $s_i$ for each player $i \in N$ that add up to $v(N)$, where $N = \{1, \dots, n\}$.

\begin{definition}[Shapley value]
    The Shapley value of a player $i$ is given by:
\begin{align}
    s_i = \sum_{S \subseteq N \backslash i} \frac{|S|! (|N|-|S|-1)!}{|N|!} (v(S \cup i) - v(S)).
\end{align}
\end{definition}
We focus on the Baseline Shapley (BShap) \cite{sundararajan2020many}, in which
\begin{align}
    v(S) = f(\overline{\*x}_S; \*x'_{N \backslash S}).
\end{align}
That is, baseline values replace the feature's absence. We denote BShap attribution by $\text{BS}_i(\overline{\*x},\*x',f)$ and $\text{BS}_i$ sometimes. For example, suppose $f(x_1,x_2) = x_1+x_2$, $\overline{\*x} = (\overline{x}_1, \overline{x}_2)$, $\*x' = (0,0)$, and $S=\{1\}$, then we have $v(S) = f(\overline{x}_1,0)$. We focus on the BShap since it has better theoretical properties than SHAP in terms of preserving axioms, as discussed in \cite{sundararajan2020many}. As a result, we are more confident about using it for sectors with high stakes. 

Another popular method is the Integrated Gradients \cite{sundararajan2017axiomatic}.
\begin{definition}[Integrated Gradients (IG)]
    Given $\overline{\*x}, \*x' \in \mathbb{R}^n$ and $f \in \mathcal{F}$, the Integrated Gradients attribution of the $i$-th component of $\overline{\*x}$ is defined as 
    \begin{align}
        \text{IG}_i(\overline{\*x},\*x',f) = (\overline{x}_i-x_i') \int_0^1 \frac{\partial f}{\partial x_i} \left( \*x' + t(\overline{\*x}-\*x') \right) \ dt.
    \end{align}
\end{definition}
For simplicity, we often use $\text{IG}_i$ for $\text{IG}_i(\overline{\*x},\*x',f)$.

\section{Asset Pricing}\label{sec:assets}

Asset pricing examines how financial assets are valued, such as stocks, financial derivatives, and bonds. Since the Nobel Prize work by \cite{black1973pricing,merton1973theory}, stochastic differential equations with risk-neutral pricing have been used as primary tools for pricing \cite{ahn1999parametric,heston1993closed,vasicek1977equilibrium}. In recent years, ML models have been increasingly used as an approximation of pricing formulas \cite{bali2023option,horvath2021deep}. These studies have demonstrated that ML models offer advantages over classical stochastic methods in terms of more flexible approximation formulas, computational efficiency, and the ability to be data-driven with fewer assumptions. 

Explainability is crucial in sectors such as finance where stakes are high. The application of ML models must be accompanied by an explanation. Existing explainable ML tools have, however, focused primarily on general problems without domain knowledge. As a result, explanations may be insufficient. In the finance community, a considerable amount of attention is paid to mathematical derivatives of pricing formulas, both first-order and higher-order. Therefore, asset pricing considers mathematical derivatives to be domain knowledge. Using mathematical derivatives, researchers and practitioners have been able to gain a better understanding of financial models. For this reason, we intend to incorporate mathematical derivatives into our explanation of ML models. More specifically, \textbf{BAMs must reflect risks associated with sensitive features. }


We provide two examples with discussions on their mathematical derivatives and corresponding models under simplified assumptions, which will be used in the remainder of the manuscript. By using these examples, we demonstrate how researchers and practitioners interpret information derived from mathematical derivatives.

\subsection{Coupon Bonds} \label{sec:bond}
A coupon is an interest payment received by a bondholder from the date of issuance until the maturity date of the bond. Zero coupon bonds are the simplest form of coupons. A zero coupon bond with a principal amount of $\$c$ and a maturity time $T$ will pay $\$c$ at $T$. 
\begin{example}\label{eg:bond}
Using continuous compounding, assume the interest rate is constant $r$, the present value ($t=0$) of a zero coupon bond is calculated as follows:
\begin{align}
    B(r,c,T) = ce^{-rT}.
\end{align}
\end{example}
The first derivative of a bond based on its interest rate, $\frac{\partial B}{\partial r}$, is known in finance as related to \textbf{duration}, introduced by \cite{macaulay1938some}. In general, the longer the duration, the more sensitive the bond price is to changes in interest rates. A second-order derivative, $\frac{\partial^2 B}{\partial r^2}$ is known in finance as related to \textbf{convexity} by \cite{dattatreya1991fixed}.  Convexity is usually preferred as it reduces sensitivity to interest rates.

\subsection{Option Pricing}\label{sec:option}

Option contracts convey to their owners, the holders, the right, but not the obligation, to purchase or sell a specified quantity of an underlying asset or instrument at a specified strike price on or before a prescribed date. European call options are a classic example. 
\begin{example}\label{eg:call_option}
    A call option is a contract between the buyer and the seller to exchange a security at a strike price $K$ at a maturity time $T$. At time $T$, if the stock price $S_T$ exceeds the strike price, then the option will be exercised and the payoff will be $S_T-K$; otherwise, the option will not be exercised and the payoff will be $0$. In summary, the value of the call option at time $T$ is equal to $C(S_T,K,T) = (S_T-K)^+$.  In option pricing, the key question is: What is the present value of an option at time $t=0$? 

    Based on a couple of assumptions, Black, Scholes, and Merton \cite{black1973pricing,merton1973theory} developed a pricing formula $C(S_0,K,T,\sigma,r)$, where $\sigma$ represents constant volatility and $r$ represents the constant risk-free interest rate. 
\end{example}

In option pricing, mathematical derivatives are referred to as \textbf{Greeks} \cite{natenberg1994option}. Financial institutions typically set (risk) limits for each of the Greeks that their traders are not permitted to exceed \cite{hull2016options}. Suppose we denote $V$ as the price of a general option, some common first-order Greeks include Delta, which is calculated as $\frac{\partial V}{\partial S}$ to measure the sensitivity of the asset price to the option price, and Vega, which is calculated as $\frac{\partial V}{\partial \sigma}$ to measure the sensitivity of the volatility to the option price. Greeks of higher order are also commonly considered. Examples include Gamma $\frac{\partial^2 V}{\partial S^2}$ and Vanna $\frac{\partial^2 V}{\partial S \partial \sigma}$. Based on the domain knowledge, practitioners often are familiar with the implications of Greeks in a wide range of situations, so it is crucial to incorporate this knowledge into explanations. In the case of a call option, a positive Delta $ \left( \frac{\partial V}{\partial S} \right)$ implies that an increase in the underlying asset price should result in an increase in the option price. As a result, if BAMs are applied, they must provide consistent explanations. More details about Greeks can be found in Appendix~\ref{sec:Greeks}.

\section{Axioms From Domain Knowledge}


\subsection{Axioms on First-order Effects}
Monotonicity can be used to reflect first-order effects in asset pricing. WLOG, we assume that all monotonic features are monotonically increasing throughout the paper. Suppose $\~{\alpha}$ is the set of all individual monotonic features and  $\neg \~{\alpha}$ its complement, then the input $\*x$ can be partitioned into $\*x = (\*x_{\~{\alpha}}, \*x_{\neg \~{\alpha}})$. Individual monotonicity is defined as follows. 
\begin{definition}[\textbf{Individual Monotonicity}] \label{def:indi_mono}
$f$ is individually monotonic with respect to $\*x_{\~{\alpha}}$ if $\forall \*x,\*x^* \text{ s.t. } \*x_{\~{\alpha}} \leq \*x^*_{\~{\alpha}}$
\begin{align} \label{eq:mono_con1}
 & f(\*x) = f(\*x_{\~{\alpha}}, \*x_{\neg \~{\alpha}}) \leq f(\*x^*_{\~{\alpha}}, \*x_{\neg \~{\alpha}}) = f(\*x^*).
\end{align}
where $\*x_{\~{\alpha}} \leq \*x_{\~{\alpha}}^*$ means $x_{\alpha_i} \leq x_{\alpha_i}^*, \forall{i}$.
\end{definition}

As discussed by \cite{chen2023can}, applying BAMs to individually monotonic features should result in positive attributions. 

\begin{definition}[\textbf{Average Individual Monotonicity (AIM) Axiom}]
    Suppose $f$ is individually monotonic with respect to $x_{\alpha}$, then we say a BAM preserves average individual monotonicity if $\forall \overline{\*x} \text{ s.t. } \overline{\*x} \geq \*x'$, we have
    \begin{align}
        A_{\alpha}(\overline{\*x},\*x',f) \geq 0.
    \end{align}
\end{definition}

\begin{example}
    In Example~\ref{eg:bond}, the interest rate should attribute negatively. 
\end{example}

We expect in some situations that individual monotonicity will have a greater impact in that, whenever a feature is increased, its attribution should be increased accordingly. \cite{friedman1999three} introduced the concept of demand individual monotonicity for this situation. 

\begin{definition}[\textbf{Demand Individual Monotonicity (DIM) Axiom}]\label{def:DIM}
    Consider two explicands $\overline{\*x} = (\overline{x}_1, \dots, \overline{x}_n)$ and $\*x^* = (\overline{x}_1, \dots, \overline{x}_{\alpha}+c, \dots, \overline{x}_n)$, where $c >0$. Suppose $f$ is individually monotonic with respect to $x_{\alpha}$. We say a BAM preserves demand individual monotonicity if $\forall \overline{\*x} \text{ s.t. } \overline{\*x} \geq \*x'$,
    \begin{align}
        A_{\alpha}(\*x^*,\*x',f) \geq A_{\alpha}(\overline{\*x},\*x',f).
    \end{align}
\end{definition}

\begin{example}
    In Example~\ref{eg:bond}, any additional increase in the interest rate should always result in the decrease of a zero-coupon bond, regardless of its principal. 
\end{example}

\subsection{Axioms on Second-order Main Effects}

Second-order derivatives may provide additional insights into the model's behavior. The diminishing marginal effect is one of the most common phenomena \cite{gupta2018diminishing,pya2015shape}. 

\begin{definition}[\textbf{Diminishing Marginal Effect (DME)}]
Suppose $\*x = (x_{\alpha},\*x_{\neg \alpha})$. We say $f$ has the \textbf{diminishing marginal effect} with respect to $x_{\alpha}$ if $\frac{\partial}{\partial x_{\alpha}} f(x_{\alpha},\*x_{\neg \alpha}) \geq 0$ and $\frac{\partial^2}{\partial x_{\alpha}^2} f(x_{\alpha},\*x_{\neg \alpha}) \leq 0$. Similarly, we say $f$ has the \textbf{reverse DME (RDME)} if $\frac{\partial}{\partial x_{\alpha}} f(x_{\alpha},\*x_{\neg \alpha}) \leq 0$ and $\frac{\partial^2}{\partial x_{\alpha}^2} f(x_{\alpha},\*x_{\neg \alpha}) \geq 0$. 

\end{definition}

 Basically, DME implies a slowing in the increment of the function. This motivates us to propose the following axiom.

\begin{definition}[\textbf{Diminishing Marginal Axiom}] \label{def:DME_axiom}
    Suppose $f$ has the diminishing marginal effect with respect to $x_{\alpha}$, then we say a BAM preserves DME if for $\overline{x}_{\alpha} \geq x^*_{\alpha} > x_{\alpha}'$, we have
    \begin{align}
        \frac{A_{\alpha} ((\overline{x}_{\alpha},\overline{\*x}_{\neg \alpha}),\*x',f)}{\overline{x}_{\alpha}-x_{\alpha}'}  \leq \frac{A_{\alpha}((x^*_{\alpha},\overline{\*x}_{\neg \alpha}),\*x',f)}{x^*_{\alpha}-x_{\alpha}'}.
    \end{align}
    A BAM preserves reverse RDME if
    \begin{align}
        \frac{A_{\alpha} ((\overline{x}_{\alpha},\overline{\*x}_{\neg \alpha}),\*x',f)}{\overline{x}_{\alpha}-x_{\alpha}'}  \geq \frac{A_{\alpha}((x^*_{\alpha},\overline{\*x}_{\neg \alpha}),\*x',f)}{x^*_{\alpha}-x_{\alpha}'}.
    \end{align}
\end{definition}

\begin{example}
    In Example~\ref{eg:bond}, the bond has the RDME with respect to the interest rate. The bond benefits from such complexity, as it prevents huge losses when interest rates rise significantly. RDME axiom can, therefore, be used to reflect convexity in bonds. 
\end{example}

Similarly, we define increasing marginal effects. 

\begin{definition}[\textbf{Increasing Marginal Effect (IME)}]
Suppose $\*x = (x_{\alpha},\*x_{\neg \alpha})$. We say $f$ has the increasing marginal effect with respect to $x_{\alpha}$ if $\frac{\partial}{\partial x_{\alpha}} f(x_{\alpha},\*x_{\neg \alpha}) \geq 0$ and $\frac{\partial^2}{\partial x_{\alpha}^2} f(x_{\alpha},\*x_{\neg \alpha}) \geq 0$. 

\end{definition}

\begin{definition}[\textbf{Increasing Marginal Axiom}] \label{def:IME_axiom}
    Suppose $f$ has the increasing marginal effect with respect to $x_{\alpha}$, then we say a BAM preserves IME if for $\overline{x}_{\alpha} > x^*_{\alpha} > x_{\alpha}'$, we have
    \begin{align}
        \frac{A_{\alpha} ((\overline{x}_{\alpha},\overline{\*x}_{\neg \alpha}),\*x',f)}{\overline{x}_{\alpha}-x_{\alpha}'}  \geq \frac{A_{\alpha}((x^*_{\alpha},\overline{\*x}_{\neg \alpha}),\*x',f)}{x^*_{\alpha}-x_{\alpha}'}.
    \end{align}
\end{definition}

\subsection{Axioms on Comparing Assets}

The investor may wish to compare different assets before making a decision. Bonds as well as options discussed in Section~\ref{sec:assets} for example, may be considered together. As a result, different assets are involved, and their features may differ. Nevertheless, they share many common features, such as the market interest rate and volatility. If investors are concerned about the potential increase in interest rates, they may find it useful to compare the sensitivity of these assets. Therefore, we should be able to determine from BAMs if a particular product is always more sensitive to interest rates than another. This would allow investors to have a clear understanding of the risks associated with different assets. 

Consider $\*x = (x_{\alpha},\*x_{\~{\beta}},\*x_{\neg})$ and $\*y = (y_{\alpha},\*y_{\~{\beta}},\*y_{\neg})$. That is, $\*x$ and $\*y$ have the same features of $\alpha$ and $\~{\beta}$, but not necessarily the others, and we are mostly interested in the impact on $x_{\alpha}$. Note $\*x_{\neg}$ and $\*y_{\neg}$ might have different dimensions.

\begin{definition}[\textbf{First-order Monotonic Dominance (FMD)}]
    Suppose we have two functions $f(\*x)$ and $g(\*y)$. We say $f$ dominates $g$ with respect to $x_{\alpha}$ for the first-order if $\forall x_{\alpha} = y_{\alpha}, \forall \*x_{\~{\beta}} = \*y_{\~{\beta}}, \forall \*x_{\neg}, \*y_{\neg}$,
    \begin{align}
        & \frac{\partial}{\partial x_{\alpha}} f(x_{\alpha},\*x_{\~{\beta}},\*x_{\neg}) \geq \frac{\partial}{\partial y_{\alpha}} g(y_{\alpha},\*y_{\~{\beta}},\*y_{\neg}). 
    \end{align}
\end{definition}

\begin{definition}[\textbf{First-order Monotonic Dominance Axiom}] \label{def:FMD_axiom}
    Suppose $f$ dominates $g$ with respect to $x_{\alpha}$, then we say a BAM preserves monotonic dominance for the first-order if for two explicands $\overline{\*x}, \overline{\*y}$ such that $\overline{\*x}_{\~{\beta}} = \overline{\*y}_{\~{\beta}}, \overline{\*x}_{\~{\beta}}' = \overline{\*y}_{\~{\beta}}', \overline{x}_{\alpha} = \overline{y}_{\alpha}, \overline{x}_{\alpha}' = \overline{y}_{\alpha}'$, we have
    \begin{align}
        A_{\alpha} (\overline{\*x},\*x',f) \geq A_{\alpha} (\overline{\*y},\*y',g). 
    \end{align}
\end{definition}

Interestingly, FMD can be preserved if other axioms are maintained. This requires the introduction of a new axiom.

\begin{definition}[\textbf{Generalized Dummy (GD)}]
    We say a BAM preserves dummy if $\forall f \in \mathcal{F}$, if $f(\*x) = f(\*x^*)$, where $(\*x_*)_j = \*x_j$ except for $i$ for all $\*x, \*x^*$, then $A_i(\overline{\*x},\*x',f) = 0$. Furthermore, let $\*h(\*x) = (x_1, \dots, x_{i-1},x_{i+1},x_n)$ and let $g$ be a reduction of $f$ omitting dummy features. With loss of generality, let $g(x_1, \dots, x_{i-1},x_{i+1}, \dots, x_n) = f(x_1, \dots, x_{i-1}, 0, x_{i+1}, \dots, x_n)$, then we say a BAM preserves generalized dummy if $A_j(\*h(\overline{\*x}),\*h(\*x'),g) = A_j(\overline{\*x},\*x',f)$ for $j \neq i$.
\end{definition}

\begin{theorem}\label{thm:thm_FMD}
    If a BAM preserves linearity, generalized dummy, and AIM, then it preserves FMD. 
\end{theorem}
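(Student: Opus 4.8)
The plan is to collapse the two different domains of $f$ and $g$ into one common domain, and then recognize the desired dominance inequality as a single instance of the AIM axiom applied to a difference of (suitably extended) functions. The three hypothesized axioms enter at exactly three points: generalized dummy to move between domains, linearity to split the difference, and AIM to sign it.

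First I would introduce a combined input $\*z = (z_{\alpha}, \*z_{\~{\beta}}, \*u, \*w)$, where $\*u$ plays the role of the $f$-only coordinates $\*x_{\neg}$ and $\*w$ plays the role of the $g$-only coordinates $\*y_{\neg}$. Define the lifts $\tilde{f}(z_{\alpha},\*z_{\~{\beta}},\*u,\*w) = f(z_{\alpha},\*z_{\~{\beta}},\*u)$ and $\tilde{g}(z_{\alpha},\*z_{\~{\beta}},\*u,\*w) = g(z_{\alpha},\*z_{\~{\beta}},\*w)$. Both are real analytic, hence in $\mathcal{F}$, and by construction the block $\*w$ consists of dummy features of $\tilde{f}$ while $\*u$ consists of dummy features of $\tilde{g}$. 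Given compatible explicands $\overline{\*x},\overline{\*y}$ and baselines $\*x',\*y'$ as in Definition~\ref{def:FMD_axiom}, set $\overline{\*z} = (\overline{x}_{\alpha},\overline{\*x}_{\~{\beta}},\overline{\*x}_{\neg},\overline{\*y}_{\neg})$ and $\*z' = (x'_{\alpha},\*x'_{\~{\beta}},\*x'_{\neg},\*y'_{\neg})$; since $\overline{\*x}\geq\*x'$ and $\overline{\*y}\geq\*y'$ we get $\overline{\*z}\geq\*z'$, so all attributions below live in the admissible region.

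Next I would apply the generalized dummy axiom one coordinate at a time, treating $\tilde{f}$ (resp.\ $\tilde{g}$) as the larger function and $f$ (resp.\ $g$) as its reduction. Stripping the dummy block $\*w$ from $\tilde{f}$ gives $A_{\alpha}(\overline{\*z},\*z',\tilde{f}) = A_{\alpha}(\overline{\*x},\*x',f)$; the values assigned to the $\*w$-coordinates of $\overline{\*z},\*z'$ do not matter precisely because that block is dummy, so the choice $\overline{\*y}_{\neg},\*y'_{\neg}$ is harmless and the "set to $0$" convention in the axiom is WLOG. Symmetrically $A_{\alpha}(\overline{\*z},\*z',\tilde{g}) = A_{\alpha}(\overline{\*y},\*y',g)$, where the compatibility constraints $\overline{x}_{\alpha}=\overline{y}_{\alpha}$, $\overline{\*x}_{\~{\beta}}=\overline{\*y}_{\~{\beta}}$ and their baseline analogues are exactly what force the reduced explicand/baseline to coincide with $\overline{\*y},\*y'$. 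Now form $\phi = \tilde{f}-\tilde{g}\in\mathcal{F}$; its partial in $z_{\alpha}$ is $\frac{\partial}{\partial x_{\alpha}}f(z_{\alpha},\*z_{\~{\beta}},\*u) - \frac{\partial}{\partial y_{\alpha}}g(z_{\alpha},\*z_{\~{\beta}},\*w)$, and the FMD hypothesis on $f,g$ — quantified independently over $\*x_{\neg}$ and $\*y_{\neg}$ — says precisely that this is $\geq 0$ for all $\*u,\*w$. Hence $\phi$ is individually monotonic with respect to $z_{\alpha}$, so AIM yields $A_{\alpha}(\overline{\*z},\*z',\phi)\geq 0$, linearity yields $A_{\alpha}(\overline{\*z},\*z',\phi) = A_{\alpha}(\overline{\*z},\*z',\tilde{f}) - A_{\alpha}(\overline{\*z},\*z',\tilde{g})$, and chaining with the two identities above gives $A_{\alpha}(\overline{\*x},\*x',f) - A_{\alpha}(\overline{\*y},\*y',g)\geq 0$, which is FMD.

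I expect the only real obstacle to be bookkeeping rather than mathematics: stating the lifts and the iterated generalized-dummy reductions carefully enough that the two attribution identities are unambiguous, and checking that "individually monotonic with respect to $z_{\alpha}$" in Definition~\ref{def:indi_mono} constrains only the $z_{\alpha}$-partial, so that no monotonicity of $\phi$ in the $\~{\beta}$-block is required. The remaining ingredients — membership of the lifts in $\mathcal{F}$, the order relation $\overline{\*z}\geq\*z'$, and the linearity split — are routine, and the one genuinely load-bearing observation is that the independent quantification in the definition of first-order dominance is exactly what makes $\tilde{f}-\tilde{g}$ monotone on the combined domain.
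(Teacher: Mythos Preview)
Your proposal is correct and follows essentially the same approach as the paper's own proof: lift $f$ and $g$ to a common domain via dummy coordinates, use the generalized dummy axiom to identify the lifted attributions with the original ones, take the difference, observe that the FMD hypothesis makes this difference individually monotonic in the $\alpha$-coordinate, and then apply AIM together with linearity. Your write-up is in fact considerably more careful than the paper's sketch (you make explicit the combined explicand/baseline, the order relation $\overline{\*z}\geq\*z'$, and the verification that $\partial_{z_\alpha}\phi\geq 0$), but the underlying argument is the same.
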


\begin{remark}[\textbf{Hedge}]
    The FMD imposes implications for hedging. If there are two assets A and B, and A is more sensitive to interest rates than B, then a BAM that preserves FMD could reflect this. Furthermore, if we long A and short B, then the BAM would be able to indicate that this strategy is still associated with a positive risk of interest rate. 
\end{remark}

\subsection{Axioms on the Domain} \label{sec:domain}

Last but not least, we present an axiom unrelated to mathematical derivatives. In finance, certain characteristics, such as stock prices and volatility, have a random nature. Because of this, even though their domains may be large, many events may only occur with a low probability. ML models cannot accurately predict such events if they have not been observed. As a rough guide, we can divide the domain into training and out-of-training domains based on the realization of the data. Furthermore, features are often correlated with each other. As a result, features are not necessarily distributed equally among themselves. When applying BAMs, we must ensure that functions are not evaluated outside the training domain. 

Consider Example~\ref{eg:call_option}, for which we are interested in the option price based on stock prices, volatility, interest rates, strike prices, and maturity dates. Although the current market liquidity allows us to collect large amounts of data with different strikes and maturity dates, stock prices and volatility can only be observed over time. Thus, we may only have limited data regarding stock prices and volatility, which are highly correlated. Therefore, it is crucial to identify their training domain. We illustrate our point with the example below.

We study S$\&$P 500 data during the 2008 financial crisis. More details can be found in Appendix~\ref{sec:VIX} and \ref{sec:NN}. Imagine that we have a model $f$ for option prices and we would like to determine the causes of the significant changes in prices before and during the market crash. VIX is used as the approximation for the volatility, with details in Appendix~\ref{sec:VIX}. As an example, on August 1, 2008, we consider the baseline point $\*x'$ with the stock price was about 1270 and the volatility was about 0.23, suggesting the market was in a normal state; on November 21, 2008, we consider the explicand $\overline{\*x}$ with the volatility hit 0.81 and the stock price plummeted to 756, suggesting the market is panicking. If we would like to investigate the impact of stock prices and volatility, we will need to identify the training domain. It is important to note that we did not train the model in the entire rectangle determined by $(S',\sigma'),(S',\sigma),(S,\sigma'),(S,\sigma)$, as shown in Figure~\ref{fig:S_vs_V}. Particularly, we do not have data on high stock prices with high volatility or low stock prices with low volatility. It has been consistently observed empirically that changes in an asset's volatility are negatively correlated with its return. In finance, this phenomenon is known as the leverage effect \cite{ait2013leverage,black1976studies}. Therefore, BAMs should avoid using function values in these out-of-training areas. Motivated by this, we propose the following axiom regarding the geometry of a domain. 

\begin{definition}[\textbf{Convex Geometry Axiom}]
    Suppose $\mathcal{X}$ is the training domain of $f(\*x)$. We say a BAM preserves convex geometry (CG) if $\mathcal{X}$ is convex, $\overline{\*x}, \*x' \in \mathcal{X}$, then the calculation of $\*A(\overline{\*x},\*x',f)$ only requires $f(\mathcal{X})$.
\end{definition}



\begin{figure}[h]
  \centering
  \includegraphics[scale=0.3]{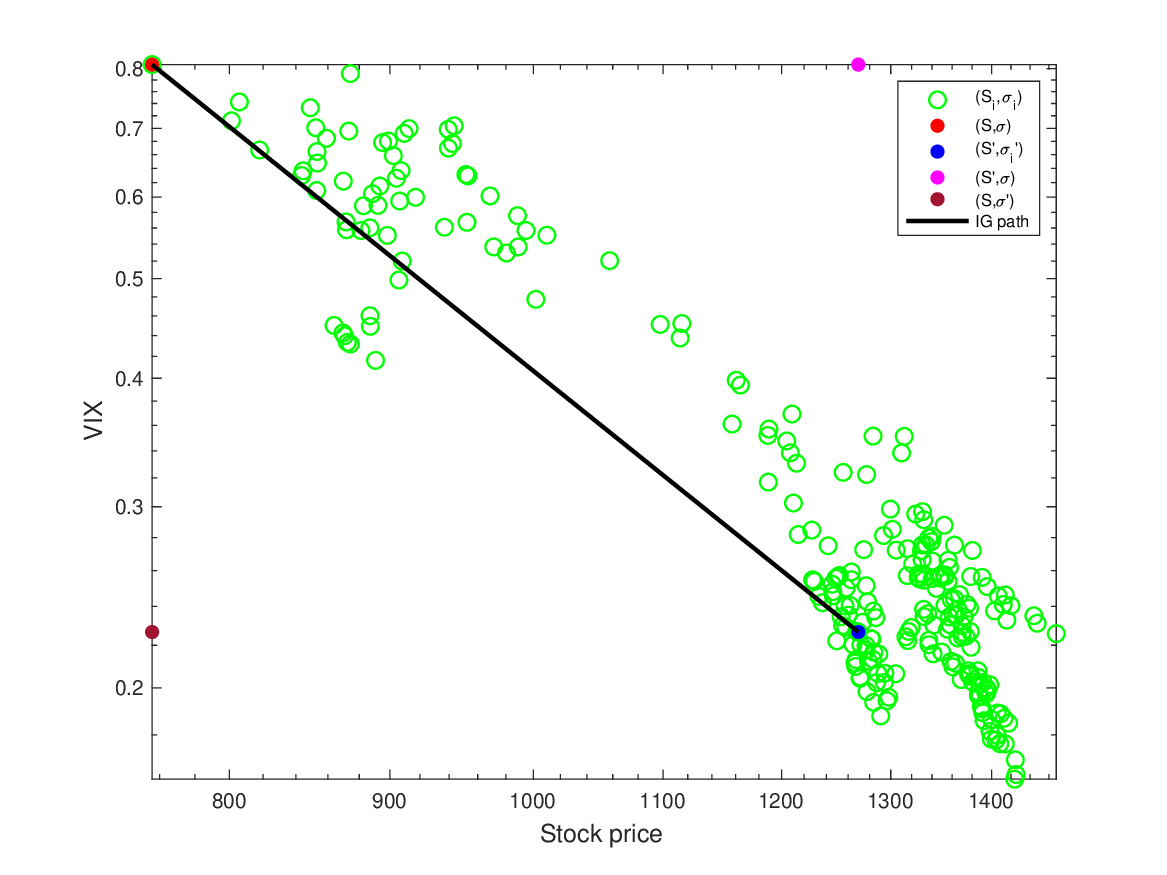}
  \captionof{figure}{Stock price vs VIX}
  \label{fig:S_vs_V}
\end{figure}


\section{Results for BShap and IG}\label{sec:THMs}

The results of BShap and IG for axioms are summarized and compared in detail. Proofs are left in Appendix~\ref{sec:proof}.  

\begin{theorem}\label{thm:BShap_all}
    BShap preserves average individual monotonicity (AIM), demand individual monotonicity (DIM), (reverse) diminishing marginal effect ((R)DME), increasing marginal effect (IME), and first-order monotonic dominance (FMD).
\end{theorem}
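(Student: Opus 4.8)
\section*{Proof proposal for Theorem~\ref{thm:BShap_all}}

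The plan is to exploit the fact that every summand of the Baseline Shapley value is a \emph{one--dimensional} increment of $f$ in coordinate $\alpha$, and then check all five axioms summand by summand. Fix the feature $\alpha$ of interest. For a coalition $S \subseteq N \setminus \alpha$ let $\*z^S_{\neg\alpha}$ be the vector of the remaining coordinates with $(\*z^S)_j = \overline{x}_j$ for $j \in S$ and $(\*z^S)_j = x_j'$ for $j \notin S \cup \{\alpha\}$, write $\Delta_S(t) = f(t,\*z^S_{\neg\alpha})$, and set $w_S = |S|!\,(n-|S|-1)!/n!$. Then
\begin{align}
 \text{BS}_\alpha(\overline{\*x},\*x',f) = \sum_{S \subseteq N\setminus\alpha} w_S\big(\Delta_S(\overline{x}_\alpha) - \Delta_S(x_\alpha')\big), \qquad w_S \ge 0, \quad \sum_{S} w_S = 1 .
\end{align}
The point is that replacing $\overline{x}_\alpha$ by a larger value (DIM) or by $x^*_\alpha$ (DME/IME), or comparing against $g$ (FMD), never changes the vectors $\*z^S_{\neg\alpha}$ nor the weights $w_S$; so each axiom reduces to a statement about the single--variable functions $\Delta_S$ with the same fixed weights.

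For AIM, individual monotonicity of $f$ in $x_\alpha$ says each $\Delta_S$ is nondecreasing, hence $\Delta_S(\overline{x}_\alpha) - \Delta_S(x_\alpha') \ge 0$ whenever $\overline{x}_\alpha \ge x_\alpha'$, and summing gives $\text{BS}_\alpha \ge 0$. For DIM, with $\*x^* = (\overline{x}_1,\dots,\overline{x}_\alpha+c,\dots,\overline{x}_n)$ the $S$-summand of $\text{BS}_\alpha(\*x^*,\*x',f) - \text{BS}_\alpha(\overline{\*x},\*x',f)$ equals $w_S\big(\Delta_S(\overline{x}_\alpha+c) - \Delta_S(\overline{x}_\alpha)\big) \ge 0$, again by monotonicity, and summing proves the claim.

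For the second--order axioms, divide by $\overline{x}_\alpha - x_\alpha' > 0$ and note that the $S$-summand of $\text{BS}_\alpha/(\overline{x}_\alpha - x_\alpha')$ is the average slope $q_S(b) := \big(\Delta_S(b) - \Delta_S(x_\alpha')\big)/(b - x_\alpha')$ of $\Delta_S$ over $[x_\alpha',b]$. The single required lemma is: if $F$ is $C^1$, $a$ is fixed, and $F'$ is nonincreasing (resp.\ nondecreasing), then $b \mapsto (F(b)-F(a))/(b-a)$ is nonincreasing (resp.\ nondecreasing); this follows by differentiating in $b$ and using $F(b)-F(a) = \int_a^b F'(s)\,ds$ together with monotonicity of $F'$. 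Under DME we have $\Delta_S' = \partial_\alpha f(\cdot,\*z^S_{\neg\alpha}) \ge 0$ and $\Delta_S'' \le 0$, so each $q_S$ is nonincreasing and $q_S(\overline{x}_\alpha) \le q_S(x^*_\alpha)$; summing with the weights $w_S$ yields the DME inequality. RDME and IME are the same argument with $\Delta_S'$ nondecreasing (convex restriction in both cases), so each $q_S$ is nondecreasing, giving the two $\ge$ inequalities. Note the sign of $\Delta_S$ itself is never used, only convexity/concavity of the restriction.

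Finally, FMD is obtained from Theorem~\ref{thm:thm_FMD} rather than directly: since AIM is already established, it suffices that BShap preserves linearity and generalized dummy. Linearity is immediate from linearity of $v \mapsto s_i$ and of $f \mapsto v(S) = f(\overline{\*x}_S;\*x'_{N\setminus S})$. For generalized dummy, if coordinate $i$ is a dummy of $f$ then $v(S\cup i) = v(S)$ for every $S$, so the marginal contributions of any $j \ne i$ agree for $S$ and $S \cup i$; grouping these two coalitions in the Shapley sum for $j$ and verifying the factorial identity $w_S + w_{S\cup i} = |S|!\,(n-|S|-2)!/(n-1)!$ (the left side computed in the $n$-player game, the right side being the Shapley weight of $S$ in the $(n-1)$-player game on $N\setminus i$) shows $\text{BS}_j(\overline{\*x},\*x',f) = \text{BS}_j(\*h(\overline{\*x}),\*h(\*x'),g)$, and the dummy condition $\text{BS}_i = 0$ is standard. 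Invoking Theorem~\ref{thm:thm_FMD} finishes the proof. I expect the only delicate points to be (i) making airtight that the $\neg\alpha$ coordinates and the combinatorial weights are genuinely invariant under the perturbations in DIM/DME/IME, so the summand-by-summand reduction is legitimate, and (ii) the factorial collapse in the generalized-dummy step; everything else is routine.
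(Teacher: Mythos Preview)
Your proposal is correct and follows essentially the same route as the paper: both treat $\text{BS}_\alpha$ as a convex combination over coalitions $S$ of one-dimensional increments $\Delta_S(\overline x_\alpha)-\Delta_S(x_\alpha')$, derive AIM/DIM from monotonicity of each $\Delta_S$, derive (R)DME/IME from secant-slope monotonicity of each concave/convex $\Delta_S$ (the paper phrases this via the mean value theorem, you via the integral form---same content), and obtain FMD from Theorem~\ref{thm:thm_FMD} using linearity and generalized dummy. Your write-up is in fact a bit more explicit than the paper's in two places (the summand-by-summand setup for DME and the factorial collapse $w_S+w_{S\cup i}=|S|!\,(n-|S|-2)!/(n-1)!$ for generalized dummy), but no new idea is involved.
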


\begin{theorem}\label{thm:IG_all}
    IG preserves average individual monotonicity (AIM), (reverse) diminishing marginal effect ((R)DME), increasing marginal effect (IME), first-order monotonic dominance (FMD), and convex geometry (CG). 
\end{theorem}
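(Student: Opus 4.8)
\section*{Proof proposal for Theorem~\ref{thm:IG_all}}

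The plan is to work throughout with the length-normalized form of IG. For any index $\alpha$ with $\overline{x}_\alpha > x_\alpha'$, dividing the definition of IG gives
\begin{align}
  \frac{\text{IG}_\alpha(\overline{\*x},\*x',f)}{\overline{x}_\alpha - x_\alpha'} \;=\; \int_0^1 \frac{\partial f}{\partial x_\alpha}\bigl(\*x' + t(\overline{\*x}-\*x')\bigr)\, dt ,
\end{align}
so the normalized attribution is precisely the average of $\partial_\alpha f$ along the straight-line path $\gamma_{\*x',\overline{\*x}}(t) := \*x' + t(\overline{\*x}-\*x')$, $t\in[0,1]$ (and when $\overline{x}_\alpha = x_\alpha'$ the attribution is trivially $0$). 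Every clause of the theorem will be reduced to comparing two such path-averages pointwise in $t$ and then integrating.

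For AIM, individual monotonicity forces $\partial_\alpha f \geq 0$ everywhere, so the integrand is nonnegative; multiplying by $\overline{x}_\alpha - x_\alpha' \geq 0$ gives $\text{IG}_\alpha \geq 0$. For the (R)DME and IME axioms, the two explicands $(\overline{x}_\alpha,\overline{\*x}_{\neg\alpha})$ and $(x^*_\alpha,\overline{\*x}_{\neg\alpha})$ share the baseline $\*x'$ and all non-$\alpha$ coordinates, hence at each $t$ the points $\gamma_{\*x',(\overline{x}_\alpha,\overline{\*x}_{\neg\alpha})}(t)$ and $\gamma_{\*x',(x^*_\alpha,\overline{\*x}_{\neg\alpha})}(t)$ agree in every coordinate except $\alpha$, where the former takes the larger value $x_\alpha' + t(\overline{x}_\alpha-x_\alpha') \geq x_\alpha' + t(x^*_\alpha - x_\alpha')$ since $\overline{x}_\alpha \geq x^*_\alpha$ and $t\geq 0$. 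Under DME we have $\partial^2_{\alpha\alpha} f \leq 0$, i.e. $\partial_\alpha f$ is nonincreasing in $x_\alpha$, so the integrand at $\overline{x}_\alpha$ is pointwise $\leq$ the integrand at $x^*_\alpha$; integrating over $t$ yields the DME inequality. The RDME and IME cases are identical with the sign of the second-order hypothesis — and therefore the direction of the pointwise comparison — reversed accordingly.

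For FMD, the matching hypotheses on the $\alpha$- and $\~{\beta}$-coordinates of both explicands and both baselines give $\overline{x}_\alpha - x_\alpha' = \overline{y}_\alpha - y_\alpha' =: \Delta \geq 0$ and, at every $t$, make the $\alpha$- and $\~{\beta}$-coordinates of $\gamma_{\*x',\overline{\*x}}(t)$ and $\gamma_{\*y',\overline{\*y}}(t)$ coincide, while the $\neg$-coordinates remain unconstrained (possibly of different dimension) — exactly the generality in which the dominance hypothesis $\partial_\alpha f \geq \partial_\alpha g$ is assumed. Hence $\partial_\alpha f(\gamma_{\*x',\overline{\*x}}(t)) \geq \partial_\alpha g(\gamma_{\*y',\overline{\*y}}(t))$ for all $t$, and multiplying the integrated inequality by $\Delta$ gives $\text{IG}_\alpha(\overline{\*x},\*x',f) \geq \text{IG}_\alpha(\overline{\*y},\*y',g)$. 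For CG, convexity of $\mathcal{X}$ together with $\overline{\*x},\*x' \in \mathcal{X}$ puts the entire segment $\{\gamma_{\*x',\overline{\*x}}(t): t\in[0,1]\}$ inside $\mathcal{X}$, and since IG evaluates only the gradient of $f$ along this segment, its computation never touches a value of $f$ outside $\mathcal{X}$ — in contrast with BShap, which probes the corners of the hyperrectangle $[\*x',\overline{\*x}]$.

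The individual computations are short; the only place that needs care — the ``main obstacle'' such as it is — is the bookkeeping in the (R)DME/IME and FMD steps, namely verifying that at each parameter $t$ the two straight-line paths being compared agree on exactly the coordinates required for the relevant hypothesis to apply (all coordinates but $x_\alpha$ for (R)DME/IME; the $x_\alpha$ together with the $\*x_{\~{\beta}}$ block for FMD), so that a pointwise comparison of the integrands is legitimate prior to integrating in $t$.
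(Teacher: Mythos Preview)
Your proposal is correct. For AIM, (R)DME, IME, and CG it is essentially the paper's own argument: normalize by $\overline{x}_\alpha - x_\alpha'$, observe that the normalized attribution is the path-average of $\partial_\alpha f$ along the segment, compare integrands pointwise in $t$, and integrate. The paper's Lemma~\ref{lem:IG_DME} does exactly this (after an affine shift to $\*x'=\*0$, which you correctly note is unnecessary), and the CG clause is ``by definition'' in both.

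The one genuine divergence is FMD. The paper routes it through Theorem~\ref{thm:thm_FMD}: it checks that IG satisfies linearity and the generalized dummy axiom (the latter via a chain-rule identity $\partial g/\partial h_j = \partial f/\partial x_j$), and then invokes the abstract result ``linearity + generalized dummy + AIM $\Rightarrow$ FMD''. You instead argue directly, noting that the matching hypotheses on the $\alpha$- and $\~{\beta}$-coordinates of explicands and baselines force the two straight-line paths to agree on those coordinates at every $t$, so the dominance hypothesis $\partial_\alpha f \geq \partial_\alpha g$ applies pointwise along the paths. Your route is shorter and more elementary, and it sidesteps the need to formalize generalized dummy for IG; the paper's route has the conceptual payoff of showing \emph{why} FMD follows automatically from structural axioms IG is already known to satisfy, which also explains the parallel BShap result without a separate computation.
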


As a result of the comparison, IG fails to preserve the DIM, which may be one of its greatest weaknesses. We provide a detailed example below. 

\begin{example}
    Consider Example~\ref{eg:bond}. Suppose we are interested in the explanation in terms of $r$ and the baseline is $\*x' = \*0$. By calculation, we have $
        \text{IG}_{\overline{r}} = \overline{c} \left( e^{-\overline{r}t} + \frac{e^{-\overline{r}t}}{\overline{r}t} - \frac{1}{\overline{r}t} \right).
    $
    For $\overline{r}t \sim 0$,  we have
    $
        IG_{\overline{r}} \sim - \frac{\overline{c}\overline{r}t}{2}.
    $
    As a result, DIM is preserved by IG. However, for $\overline{r}t \rightarrow \infty$, we have
    $
        \text{IG}_{\overline{r}} \sim -\frac{\overline{c}}{\overline{r}t},
    $
    whereas the DIM is violated. A similar event could occur for long-term bonds with a very high interest rate. As an example, inflation in the U.S. reached an extremely high level during the 1980s. The bond price decreases because of the rising interest rate but might not be reflected by IG. 
\end{example}

We examine when IG is capable of preserving DIM. 

\begin{theorem}\label{thm:IG_DIM}
    If $f$ has the IME with respect to $x_{\alpha}$, then IG preserves DIM. 
\end{theorem}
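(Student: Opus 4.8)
The plan is to collapse the two IG values appearing in DIM into a single scalar function of a ``stretch parameter'' and then show that this function is nondecreasing. Write $\*x' = (x_{\alpha}', \*x_{\neg \alpha}')$ and let $\*z(t) := \*x_{\neg \alpha}' + t(\overline{\*x}_{\neg \alpha} - \*x_{\neg \alpha}')$ be the straight-line path traced by the non-$\alpha$ coordinates from $\*x'$ to $\overline{\*x}$. Since $\*x^*$ and $\overline{\*x}$ differ only in coordinate $\alpha$, this same path $\*z(t)$ also appears in the IG integral for $\*x^*$; this is the structural observation that makes the comparison tractable (a direct change of variables $u = x_{\alpha}' + t(\overline{x}_{\alpha} - x_{\alpha}')$ would instead entangle the $\neg\alpha$ coordinates with the integration variable). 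For $\lambda \ge 0$ define
\[
  \Psi(\lambda) \;:=\; \lambda \int_0^1 \frac{\partial f}{\partial x_{\alpha}}\bigl(x_{\alpha}' + t\lambda,\ \*z(t)\bigr)\, dt .
\]
Reading off the definition of IG, the $\alpha$-component of $\*x' + t(\overline{\*x} - \*x')$ is $x_{\alpha}' + t(\overline{x}_{\alpha} - x_{\alpha}')$ and its non-$\alpha$ component is $\*z(t)$, so $\text{IG}_{\alpha}(\overline{\*x},\*x',f) = \Psi(\overline{x}_{\alpha} - x_{\alpha}')$; likewise $\text{IG}_{\alpha}(\*x^*,\*x',f) = \Psi(\overline{x}_{\alpha} + c - x_{\alpha}')$. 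Hence DIM for IG is precisely the inequality $\Psi(\overline{x}_{\alpha} + c - x_{\alpha}') \ge \Psi(\overline{x}_{\alpha} - x_{\alpha}')$.

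It therefore suffices to show $\Psi$ is nondecreasing on $[0,\infty)$. Because $\mathcal{F}$ consists of real analytic functions, $\partial f/\partial x_{\alpha}$ and $\partial^2 f/\partial x_{\alpha}^2$ are continuous, hence bounded on the compact segments involved, which justifies differentiating under the integral sign:
\[
  \Psi'(\lambda) \;=\; \int_0^1 \frac{\partial f}{\partial x_{\alpha}}\bigl(x_{\alpha}' + t\lambda, \*z(t)\bigr)\, dt \;+\; \lambda \int_0^1 t\, \frac{\partial^2 f}{\partial x_{\alpha}^2}\bigl(x_{\alpha}' + t\lambda, \*z(t)\bigr)\, dt .
\]
The IME hypothesis gives $\partial f/\partial x_{\alpha} \ge 0$ (which in particular supplies the individual monotonicity assumed in Definition~\ref{def:DIM}) and $\partial^2 f/\partial x_{\alpha}^2 \ge 0$ on the relevant region, while $\lambda \ge 0$ and $t \in [0,1]$; thus both summands are nonnegative and $\Psi'(\lambda) \ge 0$. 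Integrating $\Psi'$ over $[\overline{x}_{\alpha} - x_{\alpha}',\ \overline{x}_{\alpha} + c - x_{\alpha}']$, an interval of positive length since $c > 0$, yields the desired inequality; the boundary case $\overline{x}_{\alpha} = x_{\alpha}'$ is immediate since then $\Psi(0) = 0$ and $\Psi(c) = c \int_0^1 \partial f/\partial x_{\alpha}\, dt \ge 0$.

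I expect the only point demanding care — rather than a genuine obstacle — to be the bookkeeping of domains: one must check that both sign conditions of IME hold along every segment $t \mapsto (x_{\alpha}' + t\lambda, \*z(t))$ as $\lambda$ ranges up to $\overline{x}_{\alpha} + c - x_{\alpha}'$, i.e. on the convex hull of $\*x'$, $\overline{\*x}$ and $\*x^*$. This is exactly what the hypothesis ``$f$ has the IME with respect to $x_{\alpha}$'' is taken to provide, and the real-analyticity of $\mathcal{F}$ is what licenses the interchange of differentiation and integration. Everything else is a routine one-variable calculus argument, and it also explains the earlier counterexample: for the bond $B(r,c,T) = ce^{-rT}$ the relevant second derivative in $r$ is nonnegative (RDME, not IME), so $\Psi'$ can change sign, consistent with IG violating DIM there.
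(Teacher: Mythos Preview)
Your proof is correct. Both your argument and the paper's hinge on the same two IME sign conditions and on the observation that $\overline{\*x}$ and $\*x^*$ share the common non-$\alpha$ path $\*z(t)$, but the organization differs. The paper bounds the difference $\text{IG}_{\alpha}(\*x^*)-\text{IG}_{\alpha}(\overline{\*x})$ directly: it first lowers the prefactor from $\overline{x}_\alpha+c-x_\alpha'$ to $\overline{x}_\alpha-x_\alpha'$ using $\partial f/\partial x_\alpha\ge 0$, and then shows the remaining integrand difference is nonnegative via the mean-value theorem and $\partial^2 f/\partial x_\alpha^2\ge 0$. Your route folds both steps into the single monotonicity statement $\Psi'(\lambda)\ge 0$, with one summand of $\Psi'$ handled by each IME hypothesis; this is slightly cleaner and makes the role of each assumption transparent, while the paper's version is shorter and sidesteps differentiation under the integral sign. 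The mathematical content is the same.
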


\begin{example}
    Consider the problem of option pricing in Section~\ref{sec:option}. As a reminder, Delta and Gamma are the first derivative and second derivative of option prices with respect to stock prices, respectively. According to the BSM, they are both positive for the call options. Thus, DIM can be preserved for the stock price by IG for call options. 
\end{example}         

In the case of BShap, its main weakness is the inability to preserve CG. 

\begin{example}
BShap doesn't preserve CG. Consider the example provided in Section~\ref{sec:domain}. When applying BShap to investigate the attributions of stock prices and volatility before and during the market crash, we are required to provide $f(S',v')$, $f(S',v)$, $f(S,v')$, and $f(S,v)$. $f(S',v)$ and $f(S,v')$ are, however, outside the training domain due to the leverage effect, as shown in Figure~\ref{fig:S_vs_V}. In contrast, the IG path appears to be a reasonable choice since there is some data closing to it. Therefore, BShap may not be the best option in this situation. 
\end{example}

\section{Empirical Results}

\subsection{Option Pricing}

We present an empirical study of option pricing in 2008 in the U.S., which is considered the worst financial crisis of the 21st century. This period is used to illustrate the results under extreme circumstances, but other periods can also be analyzed similarly. We collect transaction data for European call and put options. Models are based on five features, namely underlying stock (index) prices $S$, risk-free interest rates $r$, time to expiration $\tau$, strike prices $K$, and volatility $\sigma$ such that $\*x = (S,r,\tau,K,\sigma)$. For the sake of simplicity, we concentrate on these features, since they are regarded as the most important factors when pricing options. However, it may be possible to incorporate more features, such as alternative data, in order to improve the model's performance. These five features are applied to neural networks. A detailed description of the data and model setup is provided in Appendix~\ref{sec:NN}.

\subsubsection{An Example of Explanations}

We use the baseline point of 
\begin{align}\label{eq:eg1}
    \*x' = \left[ \begin{matrix} 1433.8 & 4.26 & 0.59 & 1396 & 0.23 \end{matrix} \right],
\end{align}
which is the average statistics on the first date for call options. It should be noted that this is not a unique choice, we only use it as a demonstration. We wish to explain the explicand 
\begin{align}\label{eq:eg1_expli}
    \overline{\*x} = \left[ \begin{matrix} 1344.8 & 3.09 & 0.2 & 1150 & 0.27 \end{matrix} \right],
\end{align}
which is a call option on February 5. As of this date, the market remains relatively calm. IG and BShap results are shown in Figure~\ref{fig:eg1_IG} and \ref{fig:eg1_shap}. Overall, our explanations are qualitatively similar. The major attributions are based on the difference in stock prices and strike prices, as expected. As a result of the shorter expiration date, stock prices are less likely to change, which has a negative and significant impact. Interest rates have only a small impact. Interest rates are usually not attributed to stock prices significantly, as stock prices are much more volatile than interest rates. As volatility has not changed too much, it is expected that attributions will be small.



\begin{figure}[h]
    \centering
    \begin{minipage}{0.3\textwidth}
        \centering
        \includegraphics[width=\linewidth]{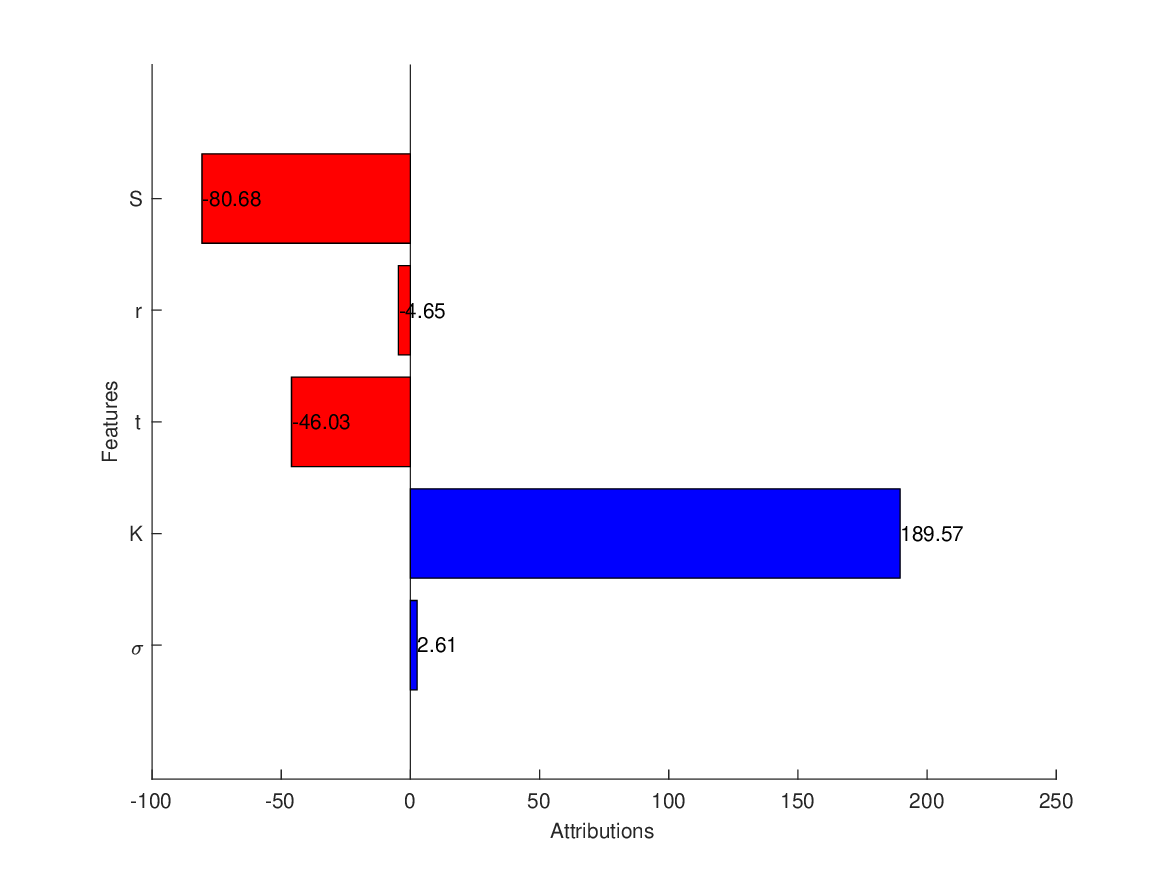}
        \caption{IG for Eq~\ref{eq:eg1}}
        \label{fig:eg1_IG}
    \end{minipage}\hfill
    \begin{minipage}{0.3\textwidth}
        \centering
        \includegraphics[width=\linewidth]{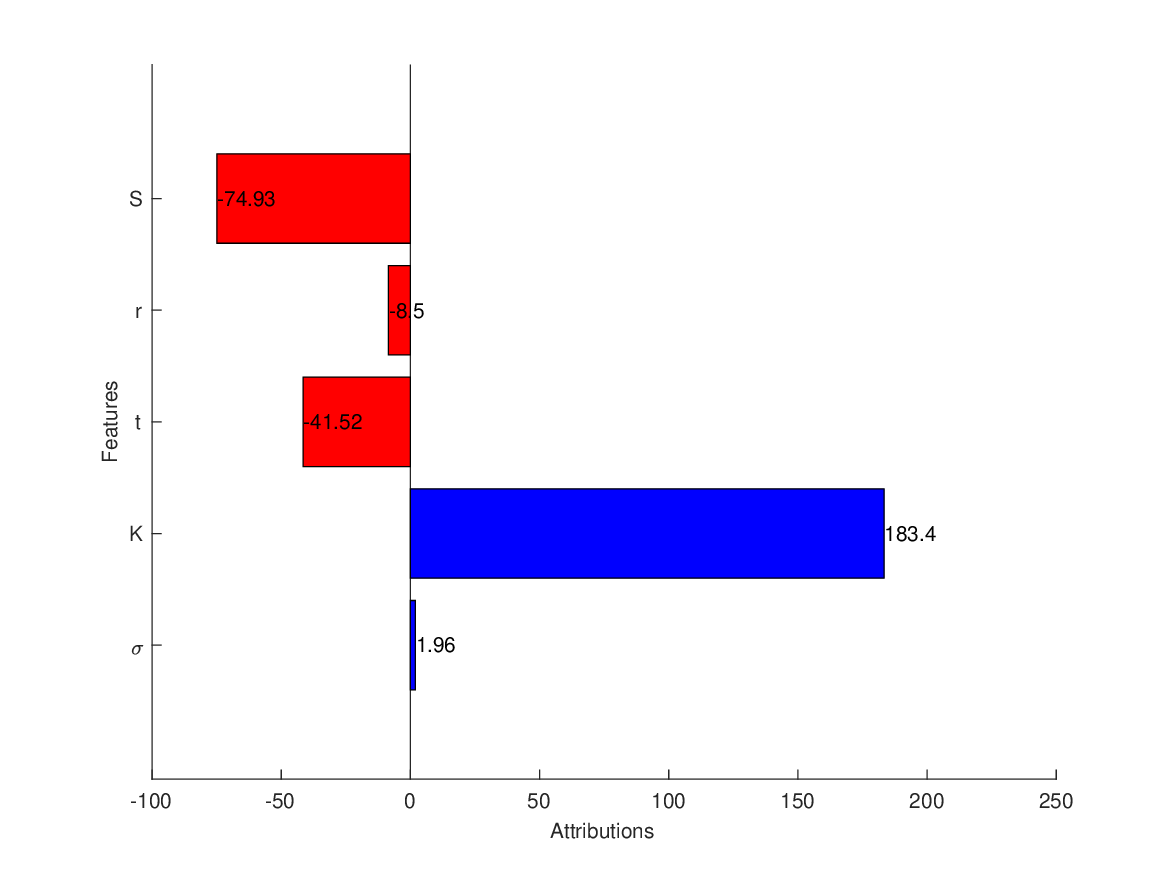}
        \caption{BShap for Eq~\ref{eq:eg1}}
        \label{fig:eg1_shap}
    \end{minipage}
\end{figure}


\subsubsection{Preservation/Violation of Risk Patterns}
Risk patterns of option pricing could also be observed. In this example, we vary the stock price from 1300 to 1500 and fix other parameters in \eqref{eq:eg1_expli} as well as the baseline \eqref{eq:eg1}. Based on the finance theory, when it comes to Delta $\frac{\partial V}{\partial S}$, it is positive for call options and negative for put options. Accordingly, we observe consistent explanations (Definition~\ref{def:IME_axiom}) in Figure~\ref{fig:eg1_term_call}. Gamma $\frac{\partial^2 V}{\partial S^2}$ are positive for both call and put options. Thus, we observe both convexity (Definition~\ref{def:DME_axiom}) for IG and BShap for a put option in Figure~\ref{fig:eg1_term_put}. The results validated the Theorem~\ref{thm:BShap_all}, \ref{thm:IG_all}, and demonstrated the potential of both IG and BShap in preserving risk patterns. It is also possible to observe other risk patterns. 




\begin{figure}[h]
    \centering
    \begin{minipage}{0.35\textwidth}
        \centering
        \includegraphics[width=1\textwidth]{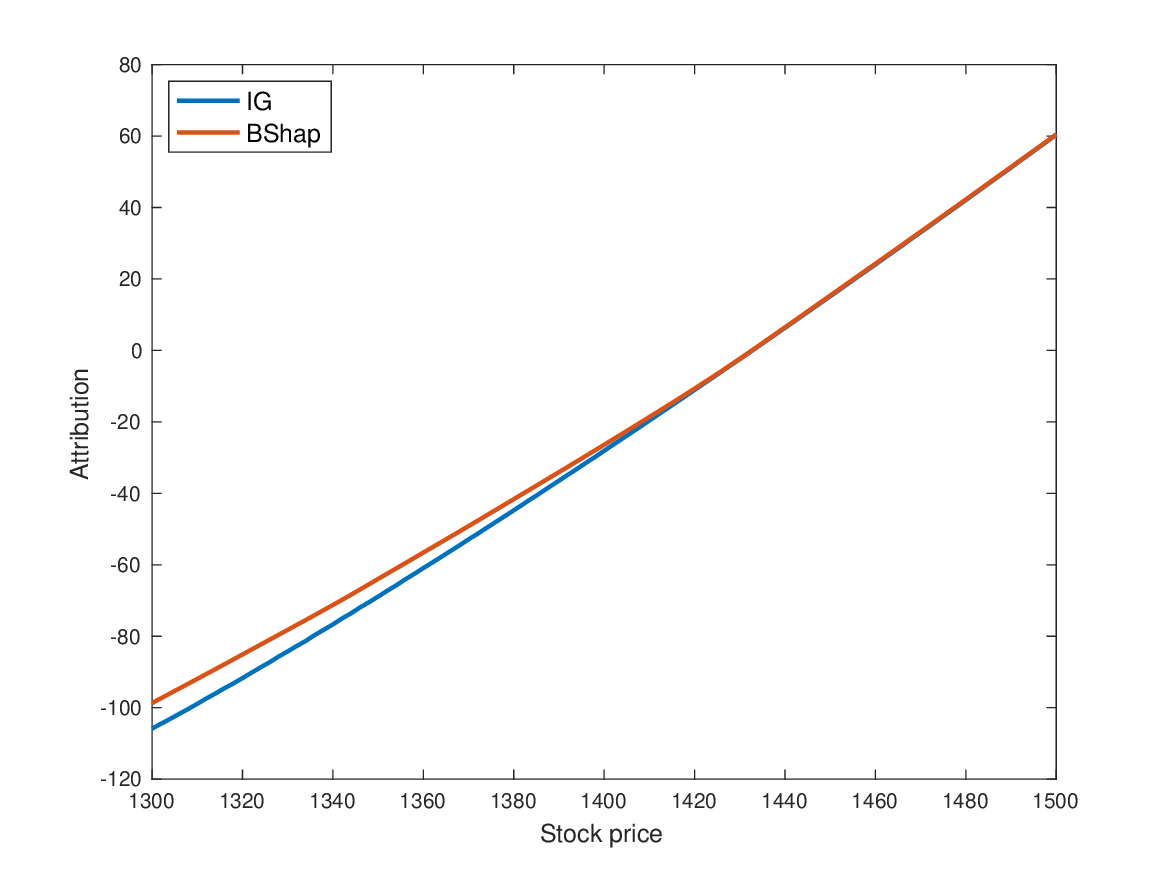} 
        \caption{Preservation of IME}
    \label{fig:eg1_term_call}
    \end{minipage}
    \begin{minipage}{0.35\textwidth}
        \centering
        \includegraphics[width=1\textwidth]{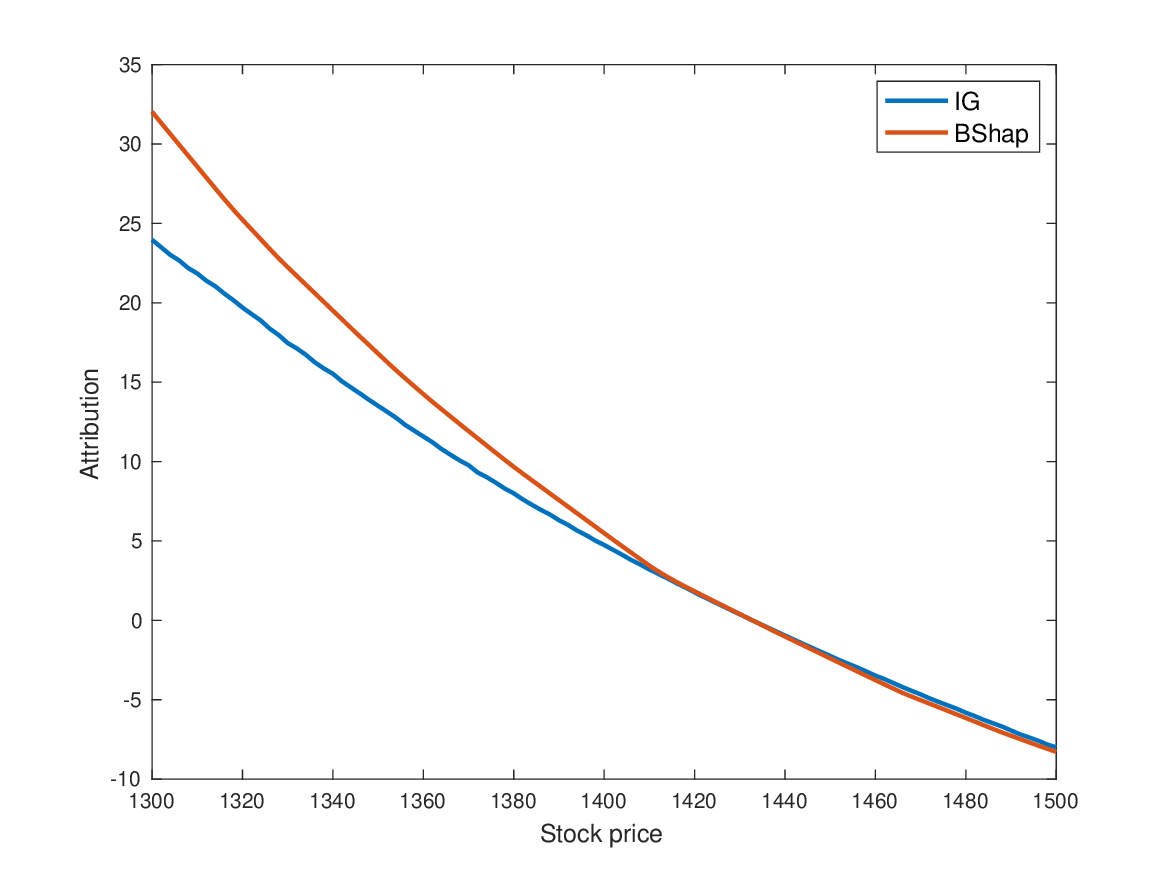}  \caption{Preservation of RDME}
        \label{fig:eg1_term_put}
    \end{minipage}
    \begin{minipage}{0.35\textwidth}
        \centering
        \includegraphics[width=1\textwidth]{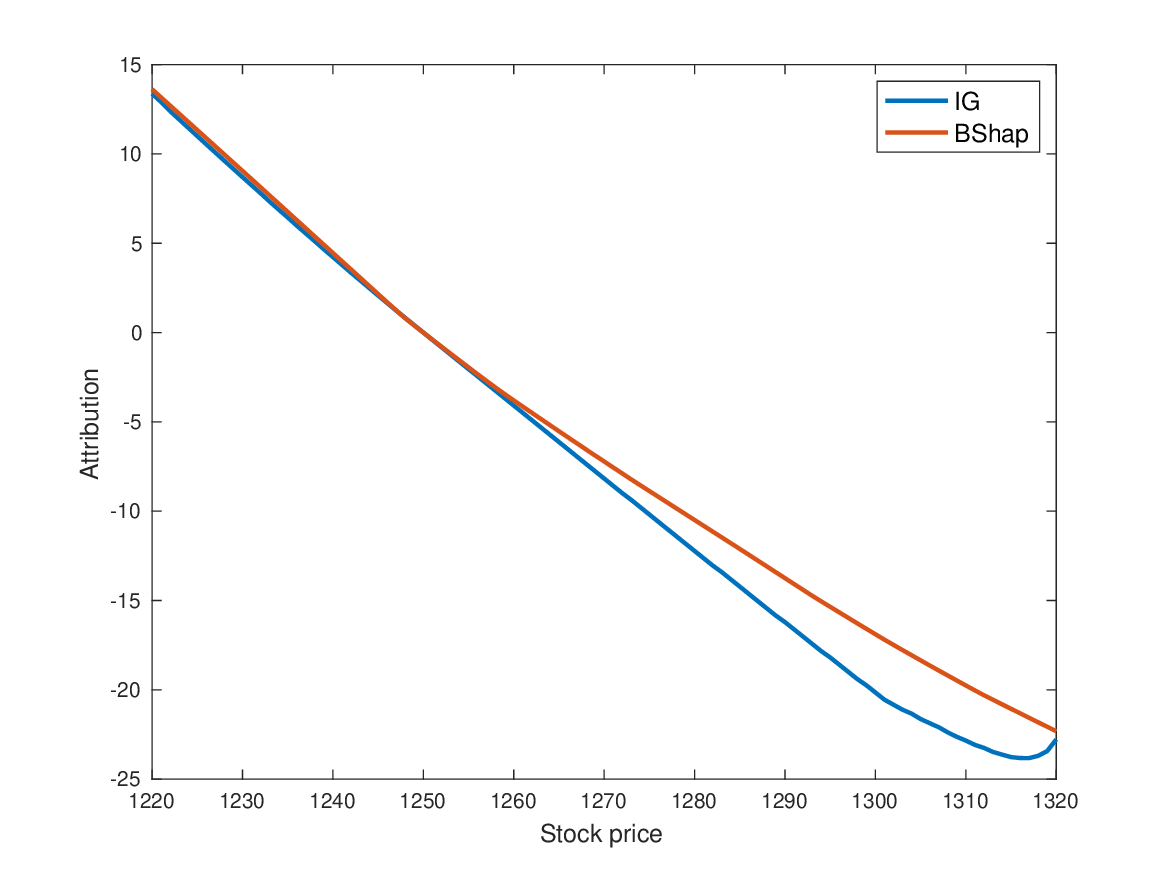} 
        \caption{Violation of DIM by IG}
        \label{fig:DIM_violation}
    \end{minipage}
\end{figure}

However, as discussed in Section~\ref{sec:THMs}, not all risk patterns are preserved by IG. As an example, we look at a put option with negative Delta $\frac{\partial V}{\partial S}$ and positive Gamma $\frac{\partial^2 V}{\partial S^2}$. Our analysis indicates that DIM (Definition~\ref{def:DIM}) is preserved for BShap, but not necessarily for IG. As an example, we consider the baseline point of 
\begin{align}
    \*x' = \left[ \begin{matrix} 
    1250 & 4.2 & 0.3 & 1240 & 0.25 \end{matrix} \right],
\end{align}
an explicand
\begin{align}
    \overline{\*x} = \left[ \begin{matrix} 
    1300 & 4.0 & 0.3 & 1300 & 0.4 \end{matrix} \right],
\end{align}
and we vary stock prices between 1220 and 1320. The risk pattern is plotted in Figure~\ref{fig:DIM_violation}. DIM for IG is violated here since the option exhibits large Gamma at the money ($S$ close to $K$). BShap is preferred over IG in such cases.

\section{Discussion and Future Work}



Our analysis indicates that BShap and IG are able to reflect risks in most situations when it comes to attribution problems. However, there are certain situations in which we should be more cautious. First, IG does not preserve DIM in general, which is why BShap is preferred when such a risk pattern is crucial to an application. Second, BShap does not preserve CG, which is why it should not be used when out-of-training samples are required for calculations, such as during periods of financial crises. 

It would be interesting, in light of the limitations, to explore other attribution methods that could preserve desired axioms in the future. It might be possible, for example, to further incorporate the time series property to avoid out-of-training samples when dealing with time series data. In a different direction, it would be interesting to examine risk attribution with applications such as portfolio analysis \cite{shalit2021shapley}.

\bibliographystyle{plain}
\bibliography{sample-base}

\appendix

\section{Option Pricing} \label{sec:option_pricing}

\subsection{Greeks}\label{sec:Greeks}
Here, we provide information regarding some commonly used Greeks. We denote $V$ as the price of a general option. 
\begin{itemize}
    \item Delta: $\frac{\partial V}{\partial S}$. Positive for a call and negative for a put.
    \item Vega: $\frac{\partial V}{\partial \sigma}$. Vega is always positive for both call and put options as larger volatility implies more possibilities of stock prices. 
    \item Rho: $\frac{\partial V}{\partial r}$. Rho is positive for a call option and negative for a put option. Rho is less sensitive than others since stock prices are much more volatile than interest rates. 
    \item Gamma: $\frac{\partial^2 V}{\partial S^2}$. Positive for both call and put options. 
    \item Vomma: $\frac{\partial^2 V}{\partial \sigma^2}$. Positive for both call and put options.
\end{itemize}

\subsection{VIX} \label{sec:VIX}

VIX is the ticker symbol for the Chicago Board Options Exchange's CBOE Volatility Index. It is used as a model-free measure of the stock market's expectation of volatility based on S$\&$P 500 index options. The VIX we use is the 30-day expected volatility of the S$\&$P 500 index, more precisely the square root of a 30-day expected realized variance of the index. It is calculated as a weighted average of out-of-the-money calls and put options on the S$\&$P 500,
\begin{align}
    \text{VIX} = \sqrt{\frac{2e^{r\tau}}{\tau} \left( \int_0^F \frac{P(K)}{K^2} \ dK+ \int_{F}^{\infty} \frac{C(K)}{K^2} \ dK \right) },
\end{align}
where $F$ is the 30-day forward price on the S$\&$P 500. More details about the calculation can be found in \cite{whaley1993derivatives}.

\subsection{Proofs} \label{sec:proof}

\begin{proof}[Proof of Theorem~\ref{thm:thm_FMD}]
    Consider $\*z$ that generalize $\*x$ and $\*y$ and new functions $\widetilde{f}(\*z)$ and $\widetilde{g}(\*z)$ whereas features that are not used as dummy features. Due to the generalized dummy axiom, we know feature attributions are the same for new functions. Let $h(\*z) = \widetilde{f}(\*z) - \widetilde{g}(\*z)$, due to the linearity and dummy axioms, we have
    \begin{align*}
        A_{\alpha}(\overline{\*z},\*z',h) &= A_{\alpha}(\overline{\*z},\*z',\widetilde{f}) - A_{\alpha}(\overline{\*z},\*z',\widetilde{g}) \\
        &= A_{\alpha}(\overline{\*x},\*x',f) - A_{\alpha}(\overline{\*y},\*y',g).
    \end{align*}
    Due to AIM, we have
$
        A_{\alpha}(\overline{\*z},\*z',h) \geq 0.
$
    Thus, we conclude. 
\end{proof}

\begin{lemma}\label{lem:BShap_AIM_DIM}
    BShap preserves AIM and DIM. 
\end{lemma}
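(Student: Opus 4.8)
The plan is to work directly from the closed form of BShap,
$\text{BS}_\alpha(\overline{\*x},\*x',f)=\sum_{S\subseteq N\setminus\alpha}\frac{|S|!(|N|-|S|-1)!}{|N|!}\bigl(v(S\cup\alpha)-v(S)\bigr)$
with $v(S)=f(\overline{\*x}_S;\*x'_{N\setminus S})$, and to exploit that each Shapley coefficient is strictly positive (since $S\subseteq N\setminus\alpha$ forces $0\le|S|\le|N|-1$, so all factorials are well defined and positive). This reduces both claims to a sign/monotonicity statement about each individual marginal contribution $v(S\cup\alpha)-v(S)$, after which summation against positive weights preserves the inequality.

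For AIM I would fix $S\subseteq N\setminus\alpha$ and observe that the two arguments $(\overline{\*x}_{S\cup\alpha};\*x'_{N\setminus(S\cup\alpha)})$ and $(\overline{\*x}_S;\*x'_{N\setminus S})$ agree in every coordinate except $\alpha$, where the former carries $\overline{x}_\alpha$ and the latter carries $x'_\alpha$. Since $\overline{\*x}\geq\*x'$ gives $\overline{x}_\alpha\geq x'_\alpha$, and since Definition~\ref{def:indi_mono} specializes coordinatewise to ``$f$ is nondecreasing in $x_\alpha$ with all other coordinates held fixed,'' each bracket is $\geq 0$; hence $\text{BS}_\alpha(\overline{\*x},\*x',f)\geq 0$.

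For DIM, with $\*x^*=(\overline{x}_1,\dots,\overline{x}_\alpha+c,\dots,\overline{x}_n)$ as in Definition~\ref{def:DIM}, I would first record that $\*x^*\geq\*x'$ (because $\overline{x}_\alpha+c>\overline{x}_\alpha\geq x'_\alpha$), so the standing hypothesis applies to both explicands. The key observation is that for every $S\subseteq N\setminus\alpha$ the term $v(S)=f(\overline{\*x}_S;\*x'_{N\setminus S})$ does not involve the explicand's $\alpha$-coordinate at all — feature $\alpha$ is absent there and hence fixed at $x'_\alpha$ — so $v(S)$ is identical whether built from $\overline{\*x}$ or from $\*x^*$. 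Only the term $v(S\cup\alpha)$ feels the perturbation, and it does so in coordinate $\alpha$ alone, moving from $\overline{x}_\alpha$ up to $\overline{x}_\alpha+c$; monotonicity of $f$ in $x_\alpha$ then gives $f(\*x^*_{S\cup\alpha};\*x'_{N\setminus(S\cup\alpha)})\geq f(\overline{\*x}_{S\cup\alpha};\*x'_{N\setminus(S\cup\alpha)})$. Thus every marginal contribution can only increase, and summing against the positive weights yields $\text{BS}_\alpha(\*x^*,\*x',f)\geq \text{BS}_\alpha(\overline{\*x},\*x',f)$.

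I do not anticipate a genuine obstacle: the whole argument is an elementary term-by-term comparison. The only points needing care are (i) the bookkeeping of the partial-vector notation $(\overline{\*x}_T;\*x'_{N\setminus T})$ — correctly tracking which coordinates sit at the baseline and which at the explicand in each summand, which is precisely what makes the ``$v(S)$ is unchanged'' step in DIM transparent; and (ii) explicitly verifying $\*x^*\geq\*x'$ so that the assumption of Definition~\ref{def:DIM} is not silently abandoned. It is also worth noting once that the monotonicity hypothesis of Definition~\ref{def:indi_mono}, though stated for a feature set $\*x_{\~{\alpha}}$, does imply monotonicity in the single coordinate $\alpha$ used throughout.
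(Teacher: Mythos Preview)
Your argument is correct. The term-by-term comparison is exactly the right idea: for each $S\subseteq N\setminus\alpha$ the two vectors $(\overline{\*x}_{S\cup\alpha};\*x'_{N\setminus(S\cup\alpha)})$ and $(\overline{\*x}_S;\*x'_{N\setminus S})$ differ only in coordinate $\alpha$, so individual monotonicity in $x_\alpha$ forces each marginal contribution to be nonnegative (AIM); and since $\alpha\notin S$, the subtracted term $v(S)$ never sees the explicand's $\alpha$-coordinate, so passing from $\overline{x}_\alpha$ to $\overline{x}_\alpha+c$ can only raise $v(S\cup\alpha)$ (DIM). Summing against the positive Shapley weights preserves both inequalities. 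Your bookkeeping points (i)--(ii) are well taken and correctly handled.

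The paper itself does not supply a proof here at all: it simply refers the reader to~\cite{chen2023can}. So your route is not so much ``different'' as ``present'': you have written out a short, self-contained argument directly from the BShap formula rather than outsourcing to an external reference. What your approach buys is transparency and independence from the cited work; what the citation buys is brevity and, presumably, a treatment in a more general framework. For the purposes of this manuscript your elementary proof is entirely adequate and arguably preferable, since the reader does not need to consult another paper to verify a claim that follows in three lines from the definitions already given.
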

\begin{proof}
    Proof in \cite{chen2023can}.
\end{proof}

\begin{lemma}\label{lem:BShap_DME}
    BShap preserves DME, RDME, and IME. 
\end{lemma}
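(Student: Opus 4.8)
The plan is to reduce the multi-feature statement to a one-dimensional claim about the function $\phi(t) := f(t, \overline{\*x}_{\neg\alpha})$ restricted to the segment $[x_\alpha', \overline{x}_\alpha]$, since all the $\neg\alpha$ coordinates are held fixed at $\overline{\*x}_{\neg\alpha}$ throughout the definitions of (R)DME and IME. First I would write out the BShap attribution $\text{BS}_\alpha((x_\alpha,\overline{\*x}_{\neg\alpha}),\*x',f)$ using the Shapley formula with $v(S) = f(\overline{\*x}_S;\*x'_{N\setminus S})$, and observe that for each coalition $S \subseteq N\setminus\alpha$, the marginal contribution $v(S\cup\alpha) - v(S)$ equals $g_S(x_\alpha) - g_S(x_\alpha')$, where $g_S(u) := f(u, \*w_S)$ and $\*w_S$ is the mixed baseline/explicand vector on $N\setminus\alpha$ determined by $S$. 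Crucially, in the DME/RDME/IME axioms only $x_\alpha$ varies while $\overline{\*x}_{\neg\alpha}$ and $\*x'$ stay fixed, so the vectors $\*w_S$ — and hence the one-dimensional functions $g_S$ — do not change as $x_\alpha$ moves from $x^*_\alpha$ to $\overline{x}_\alpha$. Thus $\text{BS}_\alpha$ is a fixed convex combination (Shapley weights are nonnegative and sum to one) of the increments $g_S(x_\alpha) - g_S(x_\alpha')$.

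The key step is then a one-dimensional lemma: if $h:[a,b]\to\mathbb{R}$ is $C^2$ with $h'\ge 0$ and $h''\le 0$ (the DME shape), then the divided-difference ratio $u \mapsto \dfrac{h(u)-h(a)}{u-a}$ is nonincreasing on $(a,b]$; and symmetrically for the RDME and IME sign patterns one gets the reversed monotonicity. This is standard: write $\dfrac{h(u)-h(a)}{u-a} = \int_0^1 h'(a + s(u-a))\,ds$, differentiate in $u$, and use the sign of $h''$ (or, more simply, invoke that for a concave function the slope from a fixed left endpoint is decreasing). I would apply this with $h = g_S$ on $[x_\alpha', \overline{x}_\alpha]$. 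Here I must check that each $g_S$ inherits the required shape from $f$: since DME is assumed to hold for $f$ with respect to $x_\alpha$ \emph{for all values of the remaining coordinates}, and each $\*w_S$ is just some particular setting of those coordinates, $g_S' \ge 0$ and $g_S'' \le 0$ hold on the whole segment. (This uses WLOG $\overline{\*x}\ge\*x'$ so the segment is nondegenerate and lies in the region where the derivative hypotheses are posited.)

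Finally I would assemble: dividing the Shapley identity through by $(\overline{x}_\alpha - x_\alpha')$ versus $(x^*_\alpha - x_\alpha')$ expresses each normalized attribution $\dfrac{\text{BS}_\alpha(\cdot)}{x_\alpha - x_\alpha'}$ as the \emph{same} convex combination $\sum_S \lambda_S \cdot \dfrac{g_S(x_\alpha) - g_S(x_\alpha')}{x_\alpha - x_\alpha'}$ of the divided differences, so the one-dimensional monotonicity passes termwise through the positive weights to the sum, giving exactly the inequalities in Definitions~\ref{def:DME_axiom} and \ref{def:IME_axiom}. I expect the main obstacle to be purely bookkeeping — carefully justifying that the "other coordinates" vectors $\*w_S$ are genuinely independent of $x_\alpha$ in the axiom's setup, and that the derivative hypotheses on $f$ transfer to every $g_S$ uniformly over $S$ — rather than anything analytically deep; the analytic core is the elementary concavity/convexity fact about slopes from a fixed endpoint.
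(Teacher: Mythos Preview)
Your proposal is correct and follows essentially the same route as the paper: both decompose $\text{BS}_\alpha$ into coalition-indexed one-dimensional increments $g_S(x_\alpha)-g_S(x_\alpha')$, invoke the elementary concavity/convexity fact that slopes from a fixed endpoint are monotone (the paper via two applications of the mean value theorem, you via the integral representation of the divided difference), and pass the inequality through the nonnegative Shapley weights. Your version is slightly more direct in comparing the normalized attributions head-on, whereas the paper first normalizes to $\*x'=\*0$, bounds $\dfrac{s_\alpha-s_\alpha^*}{\overline{x}_\alpha-x_\alpha^*}$ by $\dfrac{s_\alpha^*}{x_\alpha^*}$, and then uses a mediant-type step to conclude; the analytic content is identical.
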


\begin{proof}
    We prove the results for DME, the reverse case and IME are similar. Suppose $f$ has the DME with respect to $x_{\alpha}$, then $\frac{\partial}{\partial x_{\alpha}} f(x_{\alpha},\*x_{\neg \alpha})$ is monotonically decreasing with respect to $x_{\alpha}$. Suppose we have explicands $\overline{\*x} = (\overline{x}_{\alpha},\overline{\*x}_{\neg})$ and $\*x^* = (x_{\alpha}^*,\overline{\*x}_{\neg})$. WLOG, suppose $\*x'=\*0$ (BShap preserves affine transformation \cite{sundararajan2020many}) and $f(\*x') = 0$. By the mean value theorem, we have
    \begin{align*}
        \frac{\partial}{\partial x_{\alpha}} f(c,\overline{\*x}_{\neg \alpha}) &= \frac{f(\overline{x}_{\alpha},\overline{\*x}_{\neg \alpha}) - f(x^*_{\alpha},\overline{\*x}_{\neg \alpha})}{\overline{x}_{\alpha} - x_{\alpha}^*}, \\
        \frac{\partial}{\partial x_{\alpha}} f(d,\overline{\*x}_{\neg \alpha}) &= \frac{f(x_{\alpha}^*,\overline{\*x}_{\neg \alpha})-0}{x_{\alpha}^*-0},
    \end{align*}
    where $\overline{x}_{\alpha}>c>x_{\alpha}^*>d>0$. Since $\frac{\partial}{\alpha x_{\alpha}} f(x_{\alpha},\*x_{\neg})$ is monotonically decreasing, $\frac{\partial}{\partial x_{\alpha}} f(c,\*x_{\neg \alpha}) \leq \frac{\partial}{\partial x_{\alpha}} f(d,\*x_{\neg \alpha})$. Therefore,
    \begin{align*}
        \frac{f(\overline{x}_{\alpha},\overline{\*x}_{\neg \alpha}) - f(x^*_{\alpha},\overline{\*x}_{\neg \alpha})}{\overline{x}_{\alpha} - x_{\alpha}^*} \leq \frac{f(x_{\alpha}^*,\overline{\*x}_{\neg \alpha})}{x_{\alpha}^*}.
    \end{align*}
    Now in the calculation of Shapley value, we have
    \begin{align*}
        \frac{s_{\alpha}-s_{\alpha}^*}{\overline{x}_{\alpha}-x_{\alpha}^*} &= 
        \sum_{S \subseteq N \backslash \alpha} \frac{|S|! (|N|-|S|-1)!}{N!} \frac{v(S \cup \alpha) - v^*(S \cup \alpha)}{\overline{x}_{\alpha}-x_{\alpha}^*} \\
        &\leq \sum_{S \subseteq N \backslash \alpha} \frac{|S|! (|N|-|S|-1)!}{N!} \frac{f(x^*_{\alpha},\overline{x}_S;\*x'_{N \backslash (S \cup \alpha))}) }{x_{\alpha}^*} \\
        &= \sum_{S \subseteq N \backslash \alpha} \frac{|S|! (|N|-|S|-1)!}{N!} \frac{v^*(S \cup \alpha)}{x_{\alpha}^*} = \frac{s_{\alpha}^*}{x_{\alpha}^*}.
    \end{align*}
    This implies that
    \begin{align*}
        \frac{s_{\alpha}}{\overline{x}_{\alpha}} \leq \frac{s_{\alpha}^*}{x_{\alpha}^*}.
    \end{align*}
    Thus, we conclude. 
\end{proof}

\begin{lemma}\label{lem:BShap_FMD}
    BShap preserves FMD. 
\end{lemma}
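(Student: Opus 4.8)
The plan is to derive this lemma from Theorem~\ref{thm:thm_FMD} rather than arguing from scratch: since that theorem shows that linearity, generalized dummy, and AIM together imply FMD, it is enough to verify that BShap has each of these three properties.

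AIM for BShap is already available as Lemma~\ref{lem:BShap_AIM_DIM}. Linearity is essentially immediate: the characteristic function $v(S) = f(\overline{\*x}_S; \*x'_{N \backslash S})$ is linear in $f$, and the Shapley value is a linear functional of $v$, so $\text{BS}_i(\overline{\*x},\*x',af+bg) = a\,\text{BS}_i(\overline{\*x},\*x',f) + b\,\text{BS}_i(\overline{\*x},\*x',g)$; this is also noted in \cite{sundararajan2020many}. The substantive step is the generalized dummy axiom. For the plain dummy part, if $f$ is independent of coordinate $i$ then $v(S \cup i) = v(S)$ for every $S \subseteq N \backslash i$, so each summand of the Shapley formula for player $i$ vanishes and $\text{BS}_i = 0$. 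For the reduction part, note that since $i$ is dummy the reduced function $g$ agrees with $f$ after deleting coordinate $i$ (the value plugged into that coordinate is irrelevant), so writing $M = N \backslash i$ and letting $v_g$ denote the characteristic function of $g$ on player set $M$, one has $v_g(S) = v(S) = v(S \cup i)$ for all $S \subseteq M$. Then for a fixed player $j \neq i$ I would split the Shapley sum over $S \subseteq N \backslash j$ according to whether $i \in S$, pair each set $S' \subseteq M \backslash j$ with the two sets $S'$ and $S' \cup i$, use $v(S' \cup i \cup j) - v(S' \cup i) = v(S' \cup j) - v(S') = v_g(S' \cup j) - v_g(S')$, and combine the two Shapley weights via
\begin{align*}
\frac{k!\,(n-k-1)!}{n!} + \frac{(k+1)!\,(n-k-2)!}{n!} = \frac{k!\,(n-k-2)!}{(n-1)!},
\end{align*}
with $n = |N|$ and $k = |S'|$, the right-hand side being exactly the Shapley weight of $S'$ in the $(n-1)$-player game; hence $\text{BS}_j(\*h(\overline{\*x}),\*h(\*x'),g) = \text{BS}_j(\overline{\*x},\*x',f)$.

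The hard part is this last combinatorial recombination — it is the only place where anything beyond unwinding definitions is needed, though the identity itself reduces to $(n-k-1)+(k+1)=n$ together with $n/n! = 1/(n-1)!$. Once linearity, generalized dummy, and AIM are all in hand, Theorem~\ref{thm:thm_FMD} closes the argument. As an alternative one could give a direct proof in the style of Lemma~\ref{lem:BShap_DME} — embed $f$ and $g$ into a common coordinate space by padding $\*x_{\neg}$ and $\*y_{\neg}$ with dummy coordinates, then apply the mean value theorem coalition-by-coalition to the difference $\widetilde{f} - \widetilde{g}$ (which is monotonic in $x_{\alpha}$ by the FMD hypothesis) — but this duplicates effort and is less clean than routing through Theorem~\ref{thm:thm_FMD}.
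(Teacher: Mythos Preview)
Your proposal is correct and follows exactly the paper's route: invoke Theorem~\ref{thm:thm_FMD} after checking that BShap satisfies linearity, generalized dummy, and AIM (the latter via Lemma~\ref{lem:BShap_AIM_DIM}). The only difference is that you spell out the generalized-dummy verification via the explicit weight identity, whereas the paper dispatches it in one sentence by noting that the marginal contributions in every ordering are unchanged when a dummy player is removed.
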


\begin{proof}
    The proof is followed by the preservation of linearity and the generalized dummy of Shapley Values and Theorem~\ref{thm:thm_FMD}. For the generalized dummy, Shapley value collects the marginal contribution of all orders of players. Suppose now there is an additional dummy feature, the presence of dummy features can be removed without affecting the calculation. Therefore, the calculation of non-dummy features is the same as the game omitting the dummy feature.  
\end{proof}

\begin{proof}[Proof of Theorem~\ref{thm:BShap_all}]
    By Lemmas~\ref{lem:BShap_AIM_DIM}, \ref{lem:BShap_DME},  \ref{lem:BShap_FMD}.    
\end{proof}


\begin{lemma}\label{lem:IG_DME}
    IG preserves AIM, DME, RDME, and IME. 
\end{lemma}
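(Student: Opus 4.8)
The plan is to reduce every one of the four claims to a pointwise comparison of the integrand $\partial f/\partial x_\alpha$ along the straight-line IG path, using the fact that dividing $\text{IG}_\alpha(\overline{\*x},\*x',f)$ by $\overline{x}_\alpha - x_\alpha'$ cancels the prefactor and leaves exactly the average of $\partial f/\partial x_\alpha$ over that path. For AIM the argument is immediate: individual monotonicity in $x_\alpha$ gives $\partial f/\partial x_\alpha \geq 0$ everywhere, and since $\overline{x}_\alpha - x_\alpha' \geq 0$ by the standing assumption $\overline{\*x} \geq \*x'$, the defining integral is a nonnegative scalar times a nonnegative integral, hence $\text{IG}_\alpha \geq 0$.

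For DME, RDME, and IME I would take the two explicands $\overline{\*x} = (\overline{x}_\alpha,\overline{\*x}_{\neg\alpha})$ and $\*x^* = (x^*_\alpha,\overline{\*x}_{\neg\alpha})$ from Definition~\ref{def:DME_axiom} / Definition~\ref{def:IME_axiom}, which share the baseline $\*x'$ and all non-$\alpha$ coordinates, and write
\begin{align*}
    \frac{\text{IG}_\alpha(\overline{\*x},\*x',f)}{\overline{x}_\alpha - x_\alpha'} = \int_0^1 \frac{\partial f}{\partial x_\alpha}\Big( x_\alpha' + t(\overline{x}_\alpha - x_\alpha'),\ \*x'_{\neg\alpha} + t(\overline{\*x}_{\neg\alpha} - \*x'_{\neg\alpha}) \Big)\, dt,
\end{align*}
together with the analogous identity for $\*x^*$ obtained by replacing $\overline{x}_\alpha$ by $x^*_\alpha$. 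The key observation is that the non-$\alpha$ argument $\*x'_{\neg\alpha} + t(\overline{\*x}_{\neg\alpha} - \*x'_{\neg\alpha})$ is literally the same function of $t$ in both integrals, because $\*x^*_{\neg\alpha} = \overline{\*x}_{\neg\alpha}$. Hence the two integrands can be compared $t$ by $t$: since $\overline{x}_\alpha \geq x^*_\alpha$ and $t \geq 0$, the $\alpha$-coordinate along the first path dominates that along the second, and the sign condition on $\partial^2 f/\partial x_\alpha^2$ says whether $\partial f/\partial x_\alpha$ is nonincreasing or nondecreasing in its $\alpha$-slot. For DME ($\partial^2 f/\partial x_\alpha^2 \leq 0$) the first integrand is pointwise $\leq$ the second, and integrating in $t$ gives the required inequality; for RDME ($\partial^2 f/\partial x_\alpha^2 \geq 0$, $\partial f/\partial x_\alpha \leq 0$) and IME ($\partial^2 f/\partial x_\alpha^2 \geq 0$, $\partial f/\partial x_\alpha \geq 0$) the same argument with the inequality reversed — and, for IME, using the strict ordering $\overline{x}_\alpha > x^*_\alpha > x_\alpha'$ — yields the corresponding claims.

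The computation is routine once the path reparametrization is set up, so the only point needing care — the one I would flag as the main obstacle — is verifying that the non-$\alpha$ coordinates trace out exactly the same curve in both integrals, which is what makes the comparison genuinely pointwise in $t$ and prevents any off-diagonal entries of the Hessian of $f$ from entering. This is precisely why Definition~\ref{def:DME_axiom} and Definition~\ref{def:IME_axiom} hold the non-$\alpha$ coordinates fixed across the two explicands: if the explicands differed in some coordinate other than $\alpha$, the clean $t$-by-$t$ reduction would break down and the sign of $\partial^2 f/\partial x_\alpha^2$ alone would no longer suffice.
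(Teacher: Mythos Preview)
Your proposal is correct and follows essentially the same route as the paper: reduce to the integral representation of $\text{IG}_\alpha/(\overline{x}_\alpha - x_\alpha')$, observe that the non-$\alpha$ coordinates of the two IG paths coincide for every $t$, and compare the integrands pointwise using the sign of $\partial^2 f/\partial x_\alpha^2$. The paper additionally normalizes to $\*x' = \*0$ via affine invariance and defers AIM to \cite{chen2023can}, but the substance of the argument is the same as yours.
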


\begin{proof}
    Proof for the AIM is in \cite{chen2023can}. We prove the result for DME, the reverse case and IME case are similar. WLOG, suppose $\*x' = \*0$ (IG preserves affine transformation \cite{lundstrom2023four}). We calculate that
    \begin{align*}
        & \frac{1}{\overline{x}_\alpha} \text{IG}_{\alpha}(\overline{\*x}) - 
        \frac{1}{x_{\alpha}^*} \text{IG}_{\alpha}(\*x^*) \\
        &= 
        \int_0^1 \frac{\partial}{\partial x_{\alpha}} f(t \overline{x}_{\alpha},t \overline{\*x}_{\neg \alpha}) - \frac{\partial}{\partial x_{\alpha}} f(tx_{\alpha}^*,t\overline{\*x}_{\neg \alpha}) \ dt.
    \end{align*}
    Since 
    \begin{align*}
        \frac{\partial}{\partial x_{\alpha}} f(t \overline{x}_{\alpha},t \overline{\*x}_{\neg \alpha}) - \frac{\partial}{\partial x_{\alpha}} f(tx_{\alpha}^*,t \overline{\*x}_{\neg \alpha}) \leq 0, \forall t \in [0,1]
    \end{align*}
    by mean-value theorem. We conclude.
\end{proof}

\begin{lemma}\label{lem:IG_FMD}
    IG preserves FMD. 
\end{lemma}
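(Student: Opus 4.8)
The plan is to obtain this from Theorem~\ref{thm:thm_FMD}: it suffices to verify that IG preserves linearity, the generalized dummy axiom, and AIM. The AIM part is already contained in Lemma~\ref{lem:IG_DME}, and linearity is immediate from the definition of $\text{IG}$, since the map $f \mapsto \text{IG}_i(\overline{\*x},\*x',f)$ is built from the gradient and an integral, both linear: for $f = af_1 + bf_2$ one has $\frac{\partial f}{\partial x_i} = a\frac{\partial f_1}{\partial x_i} + b\frac{\partial f_2}{\partial x_i}$ pointwise, and pulling the constants out of the integral gives $\text{IG}_i(\overline{\*x},\*x',f) = a\,\text{IG}_i(\overline{\*x},\*x',f_1) + b\,\text{IG}_i(\overline{\*x},\*x',f_2)$. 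So the only genuine work is the generalized dummy axiom.

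For the dummy part, if $f(\*x) = f(\*x^*)$ whenever $\*x$ and $\*x^*$ agree in every coordinate except the $i$-th, then $f$ does not depend on $x_i$, so $\frac{\partial f}{\partial x_i} \equiv 0$ and hence $\text{IG}_i(\overline{\*x},\*x',f) = 0$. For the generalized dummy part, let $g$ be the reduction of $f$ dropping the dummy coordinate $i$, i.e. $g(\dots,x_{i-1},x_{i+1},\dots) = f(\dots,x_{i-1},0,x_{i+1},\dots)$. Along the straight-line path $\*x' + t(\overline{\*x}-\*x')$ the only coordinate of $f$ that differs from the reduced picture is the $i$-th, and $f$ is constant in that coordinate; therefore for every $j \neq i$ and every $t \in [0,1]$ the partial derivative $\frac{\partial f}{\partial x_j}$ evaluated along the $f$-path equals $\frac{\partial g}{\partial x_j}$ evaluated along the corresponding path $\*h(\*x') + t(\*h(\overline{\*x}) - \*h(\*x'))$. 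Since the multiplicative factor $\overline{x}_j - x_j'$ is untouched by the reduction, integrating yields $\text{IG}_j(\*h(\overline{\*x}),\*h(\*x'),g) = \text{IG}_j(\overline{\*x},\*x',f)$, which is exactly the generalized dummy axiom. Combining the three properties with Theorem~\ref{thm:thm_FMD} gives FMD for IG.

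I expect the generalized dummy bookkeeping to be the main obstacle — not because it is deep, but because one must check carefully that the path of the reduced function $g$ is genuinely the image of the $f$-path under deleting the $i$-th coordinate, and that neither the endpoints nor the relevant partials depend on the discarded coordinate; once that is arranged the required equality holds term by term.

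As an alternative that sidesteps Theorem~\ref{thm:thm_FMD}, one can argue directly. Take explicands $\overline{\*x},\overline{\*y}$ and baselines $\*x',\*y'$ with $\overline{\*x}_{\~{\beta}} = \overline{\*y}_{\~{\beta}}$, $\*x'_{\~{\beta}} = \*y'_{\~{\beta}}$, $\overline{x}_\alpha = \overline{y}_\alpha$, $x'_\alpha = y'_\alpha$. Along the two IG paths the $\alpha$- and $\~{\beta}$-coordinates of $f$ and $g$ agree at every $t$, while the $\neg$-coordinates $\*x_{\neg},\*y_{\neg}$ may differ; since FMD gives $\frac{\partial f}{\partial x_\alpha} \geq \frac{\partial g}{\partial y_\alpha}$ for \emph{all} values of the $\neg$-coordinates, the integrands satisfy $\frac{\partial f}{\partial x_\alpha}(\*x' + t(\overline{\*x}-\*x')) \geq \frac{\partial g}{\partial y_\alpha}(\*y' + t(\overline{\*y}-\*y'))$ pointwise, and multiplying by the common nonnegative factor $\overline{x}_\alpha - x'_\alpha$ and integrating gives $\text{IG}_\alpha(\overline{\*x},\*x',f) \geq \text{IG}_\alpha(\overline{\*y},\*y',g)$. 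Either route works; the direct one is shorter, but the Theorem~\ref{thm:thm_FMD} route reuses machinery already established.
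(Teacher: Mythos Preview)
Your proposal is correct and follows essentially the same route as the paper: invoke Theorem~\ref{thm:thm_FMD} after verifying that IG satisfies linearity, the generalized dummy axiom, and AIM (the last via Lemma~\ref{lem:IG_DME}). Your treatment of generalized dummy is a bit more explicit than the paper's chain-rule one-liner, and the direct alternative you give bypassing Theorem~\ref{thm:thm_FMD} is a clean addition not present in the paper.
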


\begin{proof}
    The proof is followed by the preservation of linearity and the generalized dummy of IG and Theorem~\ref{thm:thm_FMD}. For generalized dummy, since $g(\*h(\*x)) = f(\*x)$, for $j \neq i$, by chain rule, we have
    \begin{align*}
        \frac{\partial g}{\partial h_j} \frac{\partial h_j}{\partial x_j} = \frac{\partial f}{\partial x_j} \Rightarrow \frac{\partial g}{\partial h_j} = \frac{\partial f}{\partial x_j}.
    \end{align*}
\end{proof}

\begin{lemma}\label{lem:IG_CG}
    IG preserves CG. 
\end{lemma}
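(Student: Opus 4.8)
\textbf{Proof proposal for Lemma~\ref{lem:IG_CG}.} The plan is to argue directly from the definition of IG that the only points at which $f$ (and its partial derivatives) are ever evaluated form the straight-line segment joining the baseline to the explicand, and then to invoke convexity of the training domain to place that entire segment inside $\mathcal{X}$. Concretely, for each coordinate $i$ the quantity $\text{IG}_i(\overline{\*x},\*x',f) = (\overline{x}_i-x_i') \int_0^1 \frac{\partial f}{\partial x_i}\!\left(\*x' + t(\overline{\*x}-\*x')\right)\,dt$ depends on $f$ only through the values $\left\{\tfrac{\partial f}{\partial x_i}(\~\gamma(t)) : t\in[0,1],\ i=1,\dots,n\right\}$, where $\~\gamma(t) = \*x' + t(\overline{\*x}-\*x')$ is the line segment from $\*x'$ to $\overline{\*x}$.

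The first step is therefore to make this "path dependence" explicit: $\*A = \text{IG}$ is a functional of $f$ that factors through the restriction of $\nabla f$ to the image of $\~\gamma$. The second step is the geometric heart of the matter: since $\mathcal{X}$ is convex and $\overline{\*x},\*x'\in\mathcal{X}$ by hypothesis, for every $t\in[0,1]$ we have $\~\gamma(t)=(1-t)\*x' + t\overline{\*x}\in\mathcal{X}$, i.e. $\~\gamma([0,1])\subseteq\mathcal{X}$. Combining the two steps, every evaluation of $f$ (and of its gradient) demanded by the IG formula occurs at a point of $\mathcal{X}$, so the computation of $\text{IG}(\overline{\*x},\*x',f)$ requires only $f(\mathcal{X})$, which is exactly the Convex Geometry Axiom. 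The contrast with BShap — whose value function $v(S)=f(\overline{\*x}_S;\*x'_{N\setminus S})$ queries the $2^n$ vertices of the box spanned by $\*x'$ and $\overline{\*x}$, which need not lie in $\mathcal{X}$ — can be noted but is not needed for the proof.

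I do not expect a genuine obstacle here; the statement is essentially immediate once the path-dependence is spelled out. The only point deserving a sentence of care is the evaluation of the partial derivatives $\tfrac{\partial f}{\partial x_i}$ at endpoints or boundary points of $\mathcal{X}$: strictly speaking a derivative is a limit and could appear to reference values just outside $\mathcal{X}$. This is handled by recalling the standing assumption that $f\in\mathcal{F}$ is real analytic, so its partial derivatives along $\~\gamma$ are determined intrinsically by the restriction of $f$ to $\mathcal{X}$ (indeed to any neighborhood of the segment within $\mathcal{X}$, whose relative interior is non-degenerate); alternatively one simply reads "requires $f(\mathcal{X})$" as "requires no evaluation of $f$ at points outside $\mathcal{X}$," which the segment containment already guarantees. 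With this remark in place the proof is complete.
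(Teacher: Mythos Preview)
Your proposal is correct and follows the same approach as the paper, which simply records the proof as ``By Definition'': the IG formula only evaluates $\nabla f$ along the segment $\~\gamma(t)=\*x'+t(\overline{\*x}-\*x')$, and convexity of $\mathcal{X}$ with $\overline{\*x},\*x'\in\mathcal{X}$ forces $\~\gamma([0,1])\subseteq\mathcal{X}$. Your additional remarks about boundary derivatives and the contrast with BShap are fine but not needed.
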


\begin{proof}
    By Definition.
\end{proof}

\begin{proof}[Proof of Theorem~\ref{thm:IG_all}]
    By Lemma~\ref{lem:IG_DME}, \ref{lem:IG_FMD}, \ref{lem:IG_CG}.
\end{proof}

\begin{proof}[Proof of Theorem~\ref{thm:IG_DIM}]

    Since $x_{\alpha}^* \geq x_{\alpha}$ and $\frac{\partial f}{\partial x_{\alpha}} \geq 0$, we have
    \begin{align*}
        & \text{IG}_{\alpha}(\*x^*,\*x',f) - \text{IG}_{\alpha}(\overline{\*x},\*x',f) \geq \\
        & (\overline{x}_{\alpha}-x_{\alpha}') \int_0^1 \frac{\partial f}{\partial x_{\alpha}}(\*x'+t(\*x^*-\*x')) - \frac{\partial f}{\partial x_{\alpha}}(\*x'+t(\overline{\*x}-\*x')) \ dt.
    \end{align*}
    By mean-value theorem and $\frac{\partial^2}{\partial x_{\alpha}^2} f \geq 0$, we have
    \begin{align*}
        \frac{\partial f}{\partial x_{\alpha}}(\*x'+t(\*x^*-\*x')) \geq \frac{\partial f}{\partial x_{\alpha}}(\*x'+t(\overline{\*x}-\*x')), \ \ \forall t \in [0,1].
    \end{align*}
    Thus, we conclude. 
    
\end{proof}

\subsection{Data and Neural Networks Setup for Option Pricing} \label{sec:NN}

We collected call and option data from Wharton Research Data Services in 2008. There are 253 trading days with 123969 records of option data in total. The London Interbank Offered Rate (LIBOR) is used to represent risk-free interest rates. The LIBOR served as the benchmark interest rate at which major global banks lent to one another in the international interbank market for short-term loans. This key benchmark interest rate served as an indication of borrowing costs between banks throughout the world. Daily volatility is measured by the VIX index. At each date $i$, we collect option data as $V(S_{i,j},r_{i,j},T_{i,j}, K_{i,j}, \sigma_i)$. Here are more explanations. On each date, options data with different stock prices, strike prices, and maturity dates are collected. Interest rates fluctuate over time. In addition, interest yields vary for different maturity times as a result of risk premiums. During a given day, volatility is assumed to be constant since the calculation of VIX requires access to a large number of option prices and is not possible to measure instantly.

For neural networks, we use the architecture of $[32,16]$ with Relu activations and $l2$ regularization with $\lambda = 10^{-3}$. We solve the optimization problem using the conjugate gradient, and we stop iterating after 1000 steps. We randomly split data into $75\%$ training data and $25\%$ test data. The error is measured by the squared root of the mean squared error. Two different neural networks are used to train call and put options. Call and put options result in errors of 3.73 and 3.85, respectively. This is somewhat larger than other periods given the volatility of the market in 2008. The neural network used here is only intended for demonstration purposes. More advanced models may provide better results, for example in \cite{dugas2009incorporating}.


\end{document}